\def\final{1}
\newenvironment{proofof}[1]{\begin{proof}[Proof of {#1}]
}{\end{proof}}
\newtheorem{theorem}{Theorem}[section]
\newtheorem{lemma}[theorem]{Lemma}
\newtheorem{claim}[theorem]{Claim}
\newtheorem{corollary}[theorem]{Corollary}
\newtheorem{definition}[theorem]{Definition}
\newcommand{\set}[1]{{\{#1\}}}
\newcommand{\D}{\mathcal{D}}
\newcommand{\T}{\mathcal{T}}
\newcommand{\A}{\mathcal{A}}
\newcommand{\Dist}{\bm{Distr}}
\newcommand{\re}{\mathbb{R}}
\newcommand{\liptest}{\bm{Liptest}}
\newcommand{\floor}[1]{\left\lfloor{#1} \right\rfloor}
\newcommand{\m}{\mathcal}
\newcommand{\atest}{\mathcal{A}_{\bm{test}}}
\renewcommand{\comment}[1]{}
\newcommand{\mjnote}[1]{\footnote{{\color{blue}{\bf MJ:} #1}}}
\newcommand{\kdnote}[1]{\footnote{{\color{cyan}{\bf KD:} #1}}}
\newcommand{\atnote}[1]{\footnote{{\color{red}{\bf AT:} #1}}}
\newcommand{\mjnote}[1]{}
\newcommand{\kdnote}[1]{}
\newcommand{\atnote}[1]{}
\begin{document}
	\title{\textbf{Testing Lipschitz Property over Product Distribution and its Applications to Statistical Data Privacy}}
	
\author{
Kashyap Dixit\\
Pennsylvania State University\\
kashyap@cse.psu.edu
\and
Madhav Jha\\
Pennsylvania State University\\
mxj201@cse.psu.edu
\and
Abhradeep Thakurta\\
Pennsylvania State University\\
azg161@cse.psu.edu
}
	
    \date{}
	\maketitle

\begin{abstract}
Analysis of statistical data privacy has emerged as an important area of research. In this work we design algorithms to test  privacy guarantees of a given Algorithm $\A$ executing on a data set $\D$ which contains potentially sensitive information about individuals. We design an efficient algorithm $\atest$ which can verify whether $\A$ satisfies \emph{generalized differential privacy} guarantee. Generalized differential privacy \cite{BBGLT11} is a relaxation of the notion of differential privacy initially proposed by \cite{DMNS06}. By now differential privacy is the most widely accepted notion of statistical data privacy.

To design Algorithm $\atest$, we show a new connection between the differential privacy guarantee and Lipschitzness property of a given function. More specifically, we show that an efficient algorithm for testing of Lipschitz property can be transformed into $\atest$ which can test for generalized differential privacy. Lipschitz property testing and its variants, first studied by \cite{JR11}, has been explored by many works \cite{JR11, AJMR12, AJMR-filters12, CS12} because of its intrinsic connection to data privacy as highlighted by \cite{JR11}. To develop a Lipschitz property tester with an explicit application in privacy has been an intriguing problem since the work of \cite{JR11}. In our work, we present such a direct application of lipschitz tester to testing privacy . We provide concrete instantiations of Lipschitz testers (over both the hypercube and the hypergrid domains) which are used in $\atest$ to test for privacy of Algorithm $\A$ when the underlying data set $\D$ is drawn from the hypercube and the hypergrid domains respectively.

Apart from showing a direct connection between testing of privacy and Lipschitzness testing, we generalize the work of \cite{JR11}  to the setting of distribution property testing. We design an efficient Lipschitz testing algorithm when the distribution over the domain points is not uniform. More precisely, we design an efficient Lipschitz tester for the case where the domain points are drawn from hypercube according to some fixed product distribution. This result is of independent interest to the property testing community. It is important to note that to the best of our knowledge our results on Lipschitz testing over product distributions is the only positive result in property testing literature for non-uniform distributions after \cite{AilonC06}.
\end{abstract}

\setcounter{page}{0}

\newpage

\section{Introduction}
\label{sec:intro}
Consider a data sharing platform like \emph{BlueKai}, \emph{TellApart} or \emph{Criteo}. These platforms extensively collect and share user data with third-parties (e.g., advertisers) to enhance specific user experience (e.g., better behavioral targeting). Now, the third party applications use these data to train their machine learning algorithms for better prediction abilities. Since, the data which gets shared is extremely rich in user information, it immediately poses privacy concerns over the user information \cite{Kor2010,CKNFS11}. One way to address the privacy concerns due to the third-party learning algorithms is to train the third party algorithms ``in-house'', i.e., within the data sharing platform itself thus, making sure that the trained machine learning model preserves privacy of the underlying training data. In this paper, we study a theoretical abstraction of the above mentioned problem.

Let $\D$ be a data set where each record corresponds to a particular user and contains potentially sensitive information about the user (for example, the click history of the user for a set of advertisements displayed). Let $\A$ be an algorithm that we would like to execute on the data set $\D$ (possibly to obtain some global trends about the users in $\D$) without compromising individual's privacy. This challenging problem has recently received a lot of attention in the form of theoretical investigation in determining the privacy-utility trade-offs for various old and new algorithms. However, even if an algorithm is provably ``safe'', in practice the algorithm will be implemented in a programming language that may originate from untrusted third party. This brings its own set of challenges and has primarily been addressed in the following way: transform the algorithm $\A$ into a variant which provably satisfies some theoretically sound notion of data privacy (e.g., \emph{differential privacy} \cite{DMNS06}) either by syntactic manipulation (e.g. \cite{McSherry09,ReedP10}) or doing so in some algorithmic/systems framework (eg. ~\cite{NRS, JR11, MTSSC12, RSKSW10}). While each approach has its own appeal, they all have a few shortcomings. For example, they suffer from weak utility guarantees~\cite{NRS, MTSSC12, RSKSW10} or take prohibitively large running time~\cite{JR11}  or require use of specialized syntax \cite{McSherry09,ReedP10} making it somewhat nontrivial for a non-privacy expert to produce an effective transformation.

In this work, we take a new approach to the above problem which we call {\em privacy testing}. Specifically, we initiate the study of {\em testing} whether an input algorithm $\A$ satisfies statistical privacy guarantees. We do this by formulating the problem in the well-studied framework of property testing \cite{RS, GGR}.



\paragraph{Privacy testing}
Before we execute an Algorithm $\A$ which claims to satisfy a pre-approved notion of privacy, we test for the validity of such a claim. To the best of our knowledge, ours is the first work to study this approach. More precisely, in this work we initiate the study of testing an algorithm $\A$ for differential privacy guarantees. Differential privacy in the recent past has become a well established notion of privacy \cite{Dwork06,Dwork08,Dwork09}. Roughly speaking, differential privacy guarantees that the output of an algorithm $\A$ will not depend ``too much'' on any particular record of the underlying data set $\D$. We design testing algorithms to test whether $\A$ satisfies \emph{generalized} differential privacy \cite{BBGLT11} or not. \emph{Generalized differential privacy} is a relaxation of differential privacy and follows the same principles as differential privacy. Under specific setting of parameters, generalized differential privacy collapses to the definition of differential privacy.  For a precise definition, see Section \ref{sec:gendp}. It seems to us (and we make it more formal later on) that it may not be possible to design a computationally efficient testing algorithm for testing the notion of exact differential privacy, since in some sense it is a worst case notion privacy (see \cite{BBGLT11,BD12} for a discussion on this).

\paragraph{Testing Lipschitz property under product distribution and its connection to privacy testing}
The goal of testing properties of functions is to distinguish between functions which satisfy a given property from functions which are ``far'' from satisfying the property. The notion of ``far'' is usually the fraction of points in the domain of the function on which the function needs to be redefined to make it satisfy the property.

To test for generalized differential privacy, we show a new connection between differential privacy and the problem of testing Lipschitz property which was first studied by \cite{JR11}. A recent line of work \cite{JR11, AJMR12, AJMR-filters12} has sought to explore applications of sublinear algorithms (specifically, \emph{property testers} and \emph{reconstructors}) to data privacy. We continue this line of work and show the first application of property testers (which are vastly more efficient than property reconstructors) to the setting of data privacy. Indeed, prior to this work it was not clear if property testers for Lipschitz property can be used at all in data privacy setting.

Let $\T$ be the universe from which data sets are drawn where each data set has the same number of records. A function $f: \T \rightarrow \mathbb{R}$ is $\alpha$-Lipschitz if for all pair of points $x, x^{\prime} \in \T$ the following condition holds: $|f(x) - f(x^{\prime})| \leq d_H(x, x^{\prime})$ ~where $d_H$ is the Hamming distance between $x$ and $x^{\prime}$ (that is, $d_H(x,x^{\prime})$ is the number of entries in which $x$ and $x^{\prime}$ differ). To define Lipschitz tester, we define the notion of distance between functions $f$ and $g$ defined on the same (finite) domain $\T$ under distribution $\Dist$ as follows: $dist(f,g) \stackrel{\mbox{def}}{=} \Pr\limits_{x \sim \Dist}[f(x) \neq g(x)]$. A Lipschitz tester gets an oracle access to function $f$, a distance parameter $\epsilon \in (0,1]$. It accepts Lipschitz functions $f$ and rejects with high probability functions $f$ which are $\epsilon$-far from Lipschitz property. Namely, functions $f$ for which $\min dist(f,g) > \epsilon$, where the minimum is taken over all Lipschitz functions $g$. In this work, we extend the result of \cite{JR11} to the setting of product distribution.

While $\Dist$ is usually taken to be the uniform distribution in the property testing literature, in our setting it will be important to allow $\Dist$ to be more general distribution. Taking $\Dist$ to be something other than uniform distribution is challenging to investigate even for the special case of product distributions. Indeed, prior to this work the only positive result known for the product distribution setting is the work by \cite{AilonC06} for monotonicity testing. For the setting where $\Dist$ is an arbitrary unknown distribution there are exponential lower bounds on computational efficiency of the tester are known \cite{HK07}. Above result is stated for functions with discrete range of the form $\delta \mathbb{Z}$.

 In this paper, we show that one can use a Lipschitz property testing algorithm ($\liptest$) as a proxy for testing generalized differential privacy. The tester $\liptest$ should be able to sample \emph{efficiently} the data set according to a given probability distribution defined over domain of these data sets (see Definition \ref{def:gdp}). It has been shown that this additional requirement is sufficient to give strong privacy guarantees for the algorithm being tested.( For further details see Section \ref{sec:privacy}.) Additionally, for practical applications, this tester should run efficiently, especially over the large data set domain.

With the above  motivation in mind, we have designed such a Lipschitz tester with sub-linear time complexity (with respect to the domain size) for the hypercube domain $\T=\{0,1\}^d$ with product distribution defined on data sets in $\T$. (For further details, we refer the reader to Section \ref{sec:hypercube}.) With this construction, we can test the privacy guarantees of an algorithm in time that is poly-logarithmic in domain size.
\atnote{Mention about the hypergrid result}

\subsection{Related Work}
\label{sec:related}

In the last few years, various notions of data privacy have been proposed. Some of the most prominent are $k$-anonymity \cite{Sweeney}, $\ell$-diversity \cite{MK}, differential privacy \cite{DMNS06}, noiseless privacy \cite{BBGLT11}, natural differential privacy \cite{BD12} and generalized differential privacy  \cite{BBGLT11}. While ad-hoc notions like $k$-anonymity and $\ell$-diversity being broken \cite{GKS08}, privacy community has pretty much converged to theoretical sound notions of privacy like differential privacy.  In this paper, we work with the definition of generalized differential privacy (GDP), which is a generalization of differential privacy, noiseless privacy and natural differential privacy. The primary difference between GDP and the other related definitions is that it incorporates both the randomness in the underlying data set $\D$ and the randomness of the Algorithm $\A$, where as other notions consider either the randomness of the data or the randomness of the algorithm.

In this paper, we design algorithms ($\atest$) to test whether a given algorithm $\A$ satisfies GDP. In all our algorithms, we assume that $\A$ is given as a ``white-box'', i.e., complete access to the source code of $\A$ is provided. In this paper, all the instantiations of $\atest$ are probabilistic and use Lipschitz property testing algorithms as underlying tool set. On a related note, in the field of formal verifications there have been recent works \cite{ReedP10} using which one can guarantee that a given algorithm $\A$ satisfy differential privacy. The caveat of these kind of static analysis based algorithms is that it needs the source code for $\A$ to be written in a type-safe language which is hard for a non-expert to adapt to.

One of the primary reason for considering the sublinear (with respect to the domain size) time Lipschitz testers is the large size of domain often encountered in the study of statistical privacy of databases. The property testers (\cite{RS96,GGR98}) have been extensively studied for various approximation and decision problems. They are of particular interest because they usually have sublinear (in input size) running time which is of particular interests in the problem with large inputs. Some of the ideas and definitions in this paper have been taken from the work on distribution testing (\cite{HK07, GS09,AilonC06}). Lipschitz property testers were introduced in \cite{JR11} (which gave the explicit tester for the hypercube domain) and have since then been studied in \cite{AJMR12, AJMR-filters12} for the hypergrid domain. Recently \cite{CS12} have proposed an optimal Lipshcitz tester for the hypercube domain with the underlying distribution being uniform. 
\subsection{Our Contributions}
\label{sec:contributions}

\begin{itemize}
    \item \textbf{Formulate testing of data privacy property as Lipschitz property testing:} In this paper we initiate the study of testing privacy properties of a given candidate algorithm $\A$. The specific privacy property that we test is \emph{generalized differential privacy} (GDP) (see Definition \ref{def:gdp}). In order to design a tester for GDP property, we cast the problem of testing GDP property as a problem of testing Lipschitzness. (See Theorem~\ref{thm:utilpriv}.) The problem of testing Lipschitzness was initially proposed by \cite{JR11}.
    \item \textbf{Design a generic transformation to convert an Algorithm $\A$ to its GDP variant:} We design a generic transformation to convert a candidate algorithm $\A$ to its generalized differentially private variant. (See Theorem~\ref{thm:priv1}.)
    \item \textbf{New results for Lipschitz property testing:} In order to allow our privacy tester to be effective for a large class of data generating distributions, we extend the existing results of Lipschitz property testing to work with product distributions. We give the first efficient tester for the Lipschitz property for the hypercube domain which works for arbitrary product distribution. (See Theorem~\ref{thm1}.) Previous works (even for other function properties) have mostly focused on the case of uniform distribution. To the best of our knowledge this is the only non-trivial positive result in property testing over arbitrary product distribution apart from the result of \cite{AilonC06} on monotonicity testing.

   \item \textbf{Concrete instantiation of privacy testers based on old and new Lipschitz testers} We instantiate privacy tester using Lipschitz tester described in the previous item to get a concrete instantiation of privacy tester. This also leads to a concrete instantiation of Item 2 mentioned above. We also instantiate privacy testers based on known Lipschitz testers in the literature. This is summarized in Section~\ref{sec:hypergrid}.

\end{itemize}



\subsection{Organization of the paper}
In Section \ref{sec:prelim}, we introduce the notions of privacy used in this paper, namely, differential privacy and generalized differential privacy. We also introduce the concepts of general property testing and the specific instantiation of Lipschitz property testing. In Section \ref{sec:privacy}, we show the formal connection between testing of generalized differential privacy (GDP) and Lipschitz property testing. In Section \ref{sec:hypercube}, we state our new results of Lipschitz property testing over product distributions in the hypercube domain. In Section \ref{sec:hypergrid}, we show that Lipschitz testers over the hypergrid domain can be used to test for GDP when the data sets are drawn uniformly from the hypergrid domain. Lastly, in Section \ref{sec:disc} we conclude with discussions and open problems. 
\section{Preliminaries}
\label{sec:prelim}

\subsection{Differential Privacy and Generalized Differential Privacy}
\label{sec:gendp}

In the last few years, differential privacy \cite{DMNS06} has become a well-accepted notion of statistical data privacy in the data privacy community. At a high-level the definition of differential privacy implies that the output of a differentially private algorithm will be ``almost'' the same from an adversary's perspective irrespective of an individual's presence or absence in the underlying data set. The reason that it is a meaningful notion is because the presence or absence of an individual in the data set does not affect the output of the algorithm ``too much''. This high-level intuition can be formalized as below:

\begin{definition}[$(\alpha,\gamma)$-Differential Privacy \cite{DMNS06,ODO}]
A randomized algorithm $\mathcal{A}$ is $(\alpha, \gamma)$-differentially private if for any two data sets $\D$ and $\D'$ drawn from a domain $\T$ with $|\D\Delta \D'|=1$ ($\Delta$ being the symmetric difference), and for all measurable sets $\mathcal{O}\subseteq Range(\mathcal{A})$ the following holds:
$$\Pr[\mathcal{A}(\D)\in \mathcal{O}]\leq e^\alpha\Pr[\mathcal{A}(\D')\in \mathcal{O}]+\gamma$$.
\label{def:DP}
\end{definition}

In the above definition if $\gamma=0$, we simply call it $\alpha$-differential privacy.
In this paper we intend to test if an algorithm $\A$ is $\alpha$-differentially private. In order to test the above, we mould the problem into a problem of testing Lipschitzness over the probability measure induced by  Algorithm $\A$ over a finite set $S$ (see Section \ref{sec:privacy} for more discussion on this). Since, we want to test Lipschitzness efficiently with respect to the size of the set $S$, we will use a relaxed notion of differential privacy called \emph{generalized differential privacy} (GDP) \cite{BBGLT11}. The main idea behind GDP is that it allows us to incorporate the randomness over the data generating distribution. This in turn allows us to incorporate the failure probability of the Lipschitzness testing algorithm (over the randomness of the data generating distribution). The definition of GDP below is a slight modification to the definition proposed in \cite{BBGLT11} and in most natural settings is stronger than \cite{BBGLT11}.

\begin{definition}[$(\alpha,\gamma,\beta)$-Generalized Differential Privacy]
Let $\bm{Dist}$ be the distribution over the space of all data sets drawn from domain $\T$. Let $W\subseteq\T$ be a set such that $\Pr_{\D\sim\Dist}[\D\in W]\leq \beta$. A randomized algorithm $\mathcal{A}$ is $(\alpha,\gamma, \beta)$-generalized differentially private (GDP) if for any pair data sets $\D,\D'\in\T\setminus W$ with $|\D\Delta \D'|=1$ ($\Delta$ being the symmetric difference) and for all measurable sets $\mathcal{O}\subseteq Range(\mathcal{A})$ the following holds:
$\Pr[\mathcal{A}(\D)\in\mathcal{O}]\leq e^\alpha\Pr[\mathcal{A}(\D')\in \mathcal{O}]+\gamma$, where the probability is over the randomness of the Algorithm $\A$.
\label{def:gdp}
\end{definition}
It is worth mentioning here that the above definition generalizes the \emph{noiseless privacy} definition \cite{BBGLT11} and \emph{natural differential privacy} definition \cite{BD12} in the literature. While in both noiseless and natural differential privacy definitions the randomness is \emph{solely} over the data generating distribution $\bm{Dist}$, in GDP the randomness is both over the data generating distribution and the randomness of the algorithm.

At a high-level what GDP says is that there exists a set $W$ of ``bad'' data sets where $(\alpha,\gamma)$-differential privacy condition does not hold. But the probability of drawing a data set $\D$ (over the data generating distribution $\Dist$) from $W$ is at most $\beta$ (which is usually negligible in the problem parameters). In fact if we set $\beta=0$, then we recover $(\alpha,\gamma)$-differential privacy definition (see Definition \ref{def:DP}) exactly. Similarly, it can be shown that under different choices of $(\alpha,\gamma,\beta)$ GDP implies both noiseless privacy and natural differential privacy.

\subsection{Lipschitz Property Testing}
\label{subsec:lpt}

In this work we show that efficiently testing whether an algorithm is $(\alpha,\beta,\gamma)$-generalized differentially private reduces to the problem of testing (with high success probability over the probability measure induced by  Algorithm $\A$) if the output is Lipschitz. (For further details see section, see section \ref{sec:privacy}.)

\begin{definition}
\label{def:lip}
Given a function $f:\T\rightarrow \re$ from a metric space $(\T,d_{\T})$ to $(\re,d_{\re})$, where $d_D$ and $d_R$ denote the distance function on the domain $D$ and the range $R$ respectively. The function $f$ is \textit{c-Lipschitz} if $d_{\re}(f(x),f(y))\le c\cdot d_{\T}(x,y)$.
\end{definition}

Property testing (\cite{GGR98},\cite{RS96}) is a well studied area pertaining to randomized approximation algorithms for decision problems usually having sublinear time and query complexity. At one end of the spectrum, most of the work previously done in this area assume a uniform distribution over domain elements. The other end is to consider the setting where the distribution over the domain points is not known (\cite{HK07}).

Here, we assume that the probability measure over domain elements is known and is not necessarily uniform. Although seemingly important, to the best of our knowledge, this is the first time that such a setting is explored in the lipschitz property testing. To state our results, we will need the following notation.

Let $\m{P}$ (\textit{e.g.} Lipschitzness in this case) be the property that needs to be tested over the range of function $f:D\rightarrow R$.  We define the distance of the function $f$ from $\m{P}$ as follows.

\begin{definition}\label{def:pdist}
Let $\m{P}$ and $\T$ be defined as above. The $\m{P}\textrm{-distance}$ between functions $f,g\in \mathcal{F}$ is defined by $dist_{\m{P}}(f,g)\stackrel{def}{=}\Pr_{x\sim \T}\{f(x)\neq g(x)\}	$. The $\m{P}$-distance of a function $f$ from property $\mathcal{P}$ is defined as $dist_{\m{P}}(f,\mathcal{P})= min_{g\in \mathcal{P}}dist_{\m{P}}(f,g)$. We say that $f$ is $\epsilon$-far from a property $\mathcal{P}$ if $dist_{\m{P}}(f,\mathcal{P})\ge \epsilon$.
\end{definition}

We will need the notion of the image diameter of a function $f$ for explaining our results, which, roughly speaking, is the difference between maximum and minimum values taken by $f$ on domain $\T$.

\begin{definition}[Image diameter]
The image diameter of a function $f:\T\rightarrow \re$, denoted by $ImD(f)$, is the difference between the maximum and the minimum values attained by $f$, i.e., $\max_{x\in \T} f(x)-\min_{x\in \T} f(x)$.
\label{def:imd}
\end{definition}

\section{Test for Generalized Differential Privacy}
\label{sec:privacy}


In this work we initiate the study of testing whether a given algorithm $\A$ satisfies statistical data privacy guarantees. As a specific instantiation of the problem, we study the notion of generalized differential privacy (GDP) (see Definition \ref{def:gdp}). Roughly speaking, GDP guarantee ensures that the output of Algorithm $\A$ when executed on data set $\D$ does not depend ``too much'' on any one entry of $\D$. The term ``too much'' is formalized by three parameters $\alpha$, $\gamma$ and $\beta$, where the first two parameters ($\alpha$ and $\gamma$) depends on the randomness of the Algorithm $\A$ and the parameter $\beta$ depends on the randomness of the distribution $\Dist$ generating the data. We refer to the guarantee as $(\alpha, \gamma, \beta)$-Generalized Differential Privacy (or simply $(\alpha, \gamma,\beta)$-GDP).

Given an algorithm $\A$, we design a tester $\atest$ with the following property: if the tester outputs $\bm{YES}$, then Algorithm $\A$ is $(\alpha,\gamma, \beta)$-generalized differentially private where the parameters $\beta$ and $\gamma$ can be made arbitrarily small (at the cost of increased running time). If the tester outputs $\bm{NO}$, then the Algorithm $\A$ is {\em not} $\alpha$-differentially private. We state this formally below.

\begin{theorem}[$(\theta,\alpha,\gamma,\beta)$-Privacy testing]
Let $\liptest$ be a $\theta$-approximate Lipschitz tester (see Definition \ref{def:lip-tester} below), let $\Dist$ be a distribution on the domain of datasets $\T$ and let $\A$ be an algorithm which on input $\D \sim \Dist$ outputs a value $\A(\D)$ in the finite set $\Gamma$. Suppose there is an oracle ${\cal O}_{\A}$ which for every value $o \in \Gamma$ and for every $\D\in\T$ allows constant time access to the probability measure $\mu(\A(\D) = o)$ (where the measure  is over the randomness of the algorithm $\A$). Then there exists a ``testing'' algorithm $\atest$ which on input privacy parameters $\alpha, \beta \in (0,1]$, failure probability parameter $\gamma \in (0,1]$ and access to ${\cal O}_{\A}$ and $\Dist$ satisfies the following guarantee.
\begin{itemize}
\item \textbf{(soundness)} If Algorithm $\atest$ outputs $\bm{NO}$, then the candidate algorithm $\A$ is \textbf{not} $\alpha$-differentially private.
\item \textbf{(completeness)} If Algorithm $\atest$ outputs $\bm{YES}$, then with probability at least $1-\gamma$ the candidate algorithm $\A$ is $(\alpha \theta,0, \beta)$-generalized differentially private.
\end{itemize}

The algorithm $\atest$ uses $\liptest$ as a subroutine and runs in time $O(|\Gamma|\cdot(\text{Run time of }\liptest))$.
\label{thm:utilpriv}
\end{theorem}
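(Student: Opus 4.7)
The plan is to reduce testing of $(\alpha,\gamma,\beta)$-GDP to $|\Gamma|$ invocations of $\liptest$, one per possible output of $\A$. The core observation is that for finite $\Gamma$, the $\alpha$-DP condition admits a clean function-theoretic characterization: for each $o \in \Gamma$, define $g_o : \T \to \re$ by $g_o(\D) = \frac{1}{\alpha}\log \mu(\A(\D) = o)$. Then $\A$ is $(\alpha,0)$-differentially private if and only if every $g_o$ is $1$-Lipschitz with respect to Hamming distance on $\T$: the measurable-set condition in Definition~\ref{def:DP} reduces, for a finite range, to the pointwise ratio condition $\mu(\A(\D)=o) \leq e^\alpha \mu(\A(\D')=o)$, which after taking logarithms is exactly $|g_o(\D) - g_o(\D')| \leq 1$ for every neighboring pair. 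Oracle ${\cal O}_{\A}$ supplies constant-time evaluation of each $g_o$, so $\liptest$ can be invoked as a black box.

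The tester $\atest$ therefore runs, for each $o \in \Gamma$, a copy of $\liptest$ on $g_o$ with distance parameter $\epsilon := \beta/|\Gamma|$ and with per-call failure probability boosted (by repetition if necessary) to $\gamma/|\Gamma|$. It outputs $\bm{NO}$ if any invocation rejects and $\bm{YES}$ otherwise. Soundness is immediate from the one-sided completeness of $\liptest$: if $\A$ were $\alpha$-DP, every $g_o$ would be $1$-Lipschitz and no invocation could ever reject, so a rejection certifies that some $g_o$ fails Lipschitzness and therefore that $\A$ is not $\alpha$-DP. For completeness, suppose $\atest$ outputs $\bm{YES}$. By the $\theta$-approximate soundness of $\liptest$ and a union bound over the $|\Gamma|$ calls, with probability at least $1-\gamma$ each $g_o$ agrees with some $\theta$-Lipschitz $\tilde g_o$ outside a set $W_o \subseteq \T$ of $\Dist$-mass at most $\epsilon$. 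Setting $W = \bigcup_{o \in \Gamma} W_o$ yields $\Pr_{\D \sim \Dist}[\D \in W] \leq |\Gamma|\epsilon = \beta$; for every neighboring pair $\D,\D' \in \T \setminus W$ and every $o$,
\[
\bigl|\log\mu(\A(\D)=o) - \log\mu(\A(\D')=o)\bigr| \;=\; \alpha\bigl|\tilde g_o(\D) - \tilde g_o(\D')\bigr| \;\leq\; \alpha\theta,
\]
and summing $\mu(\A(\cdot)=o)$ over $o \in \mathcal{O}$ yields exactly the $(\alpha\theta,0,\beta)$-GDP inequality.

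The main technical obstacle is aligning the precise interface of $\liptest$ with the reduction: we need its completeness to be error-free on Lipschitz inputs (so that a rejection is a genuine certificate of non-Lipschitzness and hence of non-privacy), and we need its soundness to be of the \emph{tolerant} form captured by the forthcoming Definition~\ref{def:lip-tester} (acceptance must imply $\epsilon$-closeness under $\Dist$ to a $\theta$-Lipschitz witness, not merely the rejection of functions already $\epsilon$-far from $\theta$-Lipschitz); this tolerance is exactly what lets us stitch the $|\Gamma|$ exceptional sets $W_o$ into a single GDP exception set $W$. A minor side issue is that outputs $o$ for which $\mu(\A(\D)=o) = 0$ for some $\D$ push $g_o$ to $-\infty$; such outputs either witness a direct DP violation (when some neighboring $\D'$ has positive mass on $o$) and can be screened via ${\cal O}_{\A}$, or can be assumed absent by mild regularity of $\A$. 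The overall runtime is then the cost of $|\Gamma|$ independent black-box calls to $\liptest$, as claimed.
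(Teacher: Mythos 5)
Your proposal is correct and follows essentially the same approach as the paper: same functions $\lambda_o(\D) = \frac{1}{\alpha}\log\mu(\A(\D)=o)$, same invocation of $\liptest$ per output symbol with proximity $\beta/|\Gamma|$ and failure probability $\gamma/|\Gamma|$, same union bound stitching the per-output exceptional sets $W_o$ into a single GDP exception set $W$, and the same one-sided soundness argument. Your additional remarks (the reduction from a pointwise ratio to the measurable-set DP condition by summing over $o$, and the $\mu(\A(\D)=o)=0$ edge case) address points the paper glosses over but do not constitute a different route.
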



To prove Theorem~\ref{thm:utilpriv}, we show a new connection between testing $(\alpha,0, \beta)$-GDP and the problem of testing Lipschitz property. The study of testing Lipschitz property was initiated by \cite{JR11}. We present an algorithm $\atest$ for testing $(\alpha,0,\beta)$-GDP based on a generalization of Lipschitz tester presented in \cite{JR11}. We formally define the (generalized) Lipschitz tester below where the definition differs from the standard property testing definition (example, as used in \cite{JR11}) in two aspects: (i) we require Lipschitz testers to only distinguish between Lipschitz functions from functions which are far from $\theta$-Lipschitz functions for some fixed $\theta \geq 1$ and  (ii) we measure distance between functions (in particular, how ``far'' the function is from satisfying the property) with respect to a pre-defined probability measure $\Dist$ on the domain.

\begin{definition}[$\theta$-approximate Lipschitz tester]\label{def:lip-tester}
A $\theta$-approximate Lipschitz tester $\liptest$ is a randomized algorithm that gets as input: (i) oracle access to function $f: \T \rightarrow \mathbb{R}$; (ii) oracle access to independent samples from distribution $\Dist$ on $\T$ and (iii) parameters $\epsilon, \gamma \in (0,1]$. It outputs a $\bm{YES}$/$\bm{NO}$ value and provides the following guarantee.
\begin{itemize}
\item If $\liptest$ outputs $\bm{NO}$, then with probability 1, the function $f$ is \textbf{not} Lipschitz.
\item If $\liptest$ outputs $\bm{YES}$, then with probability at least $1 - \gamma$, there exists a set $W \subseteq \T$ such that (i) the input function $f$ is $\theta$-Lipschitz on the domain $\T \setminus W$ and (ii)  $\Pr_{\D \sim \Dist}[\D \in W] \leq \epsilon$.
\end{itemize}
\end{definition}

We remark that setting $\theta = 1$ and $\Dist$ to be the uniform distribution on $\T$ recovers the standard definition of property tester (in our case, Lipschitz tester as defined in \cite{JR11}).

In Section \ref{sec:cand}, we show that one can extend the connection between GDP and Lipschitz testing to design an algorithm $\A_{\bm{privGen}}$ which converts the candidate algorithm $\A$ in to a $(\alpha,\gamma,\beta)$-generalized differentially private algorithm.

\subsection{(Generalized) Differential Privacy as Lipschitz Property over a Probability Measure}
\label{sec:lipprob}

Consider the domain of the data sets $\T$ to be a finite set and assume that (the randomized) Algorithm $\A$, whose privacy property is to be tested, maps a data set $\D\in\T$ to another finite set $\Gamma$, i.e. any output of $\A$ is always an element in $\Gamma$. Now let us look at the privacy guarantee of GDP (see Definition \ref{def:gdp}). Ignoring the parameters $\beta$ and $\gamma$, the privacy guarantee suggests that for any pair of neighboring data sets $\D, \D^{\prime} \in\T$ (drawn from the distribution $\Dist$) and any $o\in\Gamma$, the following is true:
\begin{align}
e^{-\alpha}\mu(\A(\D')=o) \leq \mu(\A(\D)=o)\leq e^\alpha\mu(\A(\D')=o)
\label{eq:1}
\end{align}

The measure $\mu$ is the probability induced by the randomness of the Algorithm $\A$. Taking logarithm of \eqref{eq:1}, we get
\begin{align}
|\log\mu(\A(\D)=o)-\log\mu(\A(\D')=o)|\leq \alpha
\label{eq:2}
\end{align}
We will use the following formulation of \eqref{eq:2}: $|\frac{1}{\alpha}\log\mu(\A(\D)=o)-\frac{1}{\alpha}\log\mu(\A(\D')=o)|\leq d_H(\D,\D')$, where $d_H$ is the Hamming metric. Now, if we view the expression $\frac{1}{\alpha}\log\mu(\A(\D)=o)$ as a function $\lambda_o:\T\to\re$ defined by setting $\lambda_o(\D) = \frac{1}{\alpha}\log\mu(\A(\D)=o)$, then we get the following condition: $|\lambda_o(\D)-\lambda_o(\D')|\leq d_H(\D,\D')$. This condition is exactly the Lipschitzness guarantee for $\lambda_o$ under the Hamming metric. Using this observation we state the following meta-algorithm $\atest$ (Algorithm \ref{alg:certify}) to test whether given Algorithm $\A$ is $(\alpha,0,\beta)$-generalized differentially private. In Algorithm \ref{alg:certify} (Algorithm $\atest$), we use a black box Lipschitz property tester $\liptest$. Later in the paper we instantiate $\liptest$ with a specific testing algorithms.

 \begin{algorithm}[htb]
  \caption{$\atest$: Generalized Differential Privacy (GDP) tester}
  \begin{algorithmic}[1]
  \REQUIRE Algorithm $\A$, data generating distribution $\Dist$, data domain $\T$, output range $\Gamma$, privacy parameters $(\alpha,\beta)$ and failure parameter $\gamma$
  \STATE $flag\leftarrow\bm{FALSE}$
  \STATE Let $\liptest$ be a $\theta$-approximate Lipschitz tester defined in Definition~\ref{def:lip-tester}.
  \FORALL{values $o\in\Gamma$}
   \STATE Define function $\lambda_o : \T \rightarrow \mathbb{R}$ by setting $\lambda_o(\D) = \frac{1}{\alpha}\log\mu(\A(\D)=o)$.
    \STATE Run $\liptest$ on $\lambda_o$ with {\em proximity} parameter $\frac{\beta}{\Gamma}$ and failure probability parameter $\frac{\gamma}{|\Gamma|}$.\label{line:13}
    \STATE If $\liptest$ outputs $\bm{NO}$, then $flag\leftarrow\bm{TRUE}$
  \ENDFOR
  \STATE If $flag=\bm{FALSE}$, then output $\bm{YES}$, otherwise output $\bm{NO}$
  \end{algorithmic}
  \label{alg:certify}
\end{algorithm}


At a high-level Algorithm $\atest$ does the following. For each possible output $o\in\Gamma$, it defines a function table $\lambda_o$ (with the domain $\T$).  It then invokes the Lipschitz testing algorithm $\liptest$ to test $\lambda_o$ for Lipschitzness property. If for every output $o\in\Gamma$, $\liptest$ outputs $\bm{YES}$, then $\atest$ outputs affirmative, and outputs negative otherwise.

\subsubsection{Proof of Theorem~\ref{thm:utilpriv}}

The claim about the running time of Algorithm $\atest$ stated in Theorem~\ref{thm:utilpriv} follows directly from the definition of Algorithm $\atest$ (Algorithm \ref{alg:certify}).  We state and prove the soundness and completeness guarantees of Theorem~\ref{thm:utilpriv} separately as Claim~\ref{claim:soundness} and Claim~\ref{claim:completeness} respectively below.

\begin{claim}[Soundness guarantee]\label{claim:soundness}
If Algorithm $\atest$ (Algorithm \ref{alg:certify}) outputs $\bm{NO}$, then the candidate algorithm $\A$ is not $\alpha$-differentially private.
\label{cl12m}
\end{claim}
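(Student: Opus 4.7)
The plan is to argue directly from the output of $\atest$. If $\atest$ outputs $\bm{NO}$, then by inspection of Algorithm~\ref{alg:certify} there must exist some $o^\ast \in \Gamma$ for which the invocation of $\liptest$ on the function $\lambda_{o^\ast}(\D) = \frac{1}{\alpha}\log \mu(\A(\D) = o^\ast)$ returned $\bm{NO}$. By the soundness clause of Definition~\ref{def:lip-tester}, whenever $\liptest$ outputs $\bm{NO}$ the input function is not Lipschitz with probability $1$. Hence we may assume deterministically that $\lambda_{o^\ast}$ fails the Lipschitz condition, i.e.\ there exist $\D_1,\D_2 \in \T$ with $|\lambda_{o^\ast}(\D_1) - \lambda_{o^\ast}(\D_2)| > d_H(\D_1, \D_2)$.

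Next I would reduce to the case of Hamming-neighboring data sets. Fix any path $\D_1 = \D^{(0)}, \D^{(1)}, \ldots, \D^{(k)} = \D_2$ of length $k = d_H(\D_1,\D_2)$ such that consecutive data sets differ in exactly one entry. If every neighboring pair satisfied $|\lambda_{o^\ast}(\D^{(i)}) - \lambda_{o^\ast}(\D^{(i+1)})| \le 1$, the triangle inequality would give $|\lambda_{o^\ast}(\D_1) - \lambda_{o^\ast}(\D_2)| \le k = d_H(\D_1,\D_2)$, contradicting the previous paragraph. So there exist neighbors $\D, \D' \in \T$ (with $|\D \Delta \D'| = 1$) with $|\lambda_{o^\ast}(\D) - \lambda_{o^\ast}(\D')| > 1$.

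Unpacking the definition of $\lambda_{o^\ast}$ and multiplying through by $\alpha$, this inequality becomes $|\log \mu(\A(\D)=o^\ast) - \log \mu(\A(\D')=o^\ast)| > \alpha$. Without loss of generality the first quantity is the larger one, so $\mu(\A(\D) = o^\ast) > e^{\alpha}\, \mu(\A(\D') = o^\ast)$. Taking the measurable set $\mathcal{O} = \{o^\ast\} \subseteq \Gamma$ in Definition~\ref{def:DP} gives $\Pr[\A(\D) \in \mathcal{O}] > e^{\alpha}\Pr[\A(\D') \in \mathcal{O}]$, which (since $\gamma = 0$) directly violates the $\alpha$-differential privacy condition on the neighboring pair $(\D,\D')$. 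Therefore $\A$ is not $\alpha$-differentially private, establishing soundness.

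There is no real obstacle here: the only subtlety is the reduction from an arbitrary violating pair to a neighboring violating pair, which is handled by the one-line triangle-inequality argument above; everything else is just substituting the definition of $\lambda_{o^\ast}$ into the privacy inequality. The proof never needs to touch the probabilistic behaviour of $\liptest$ beyond the fact that its $\bm{NO}$ output is deterministically correct.
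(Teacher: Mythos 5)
Your proof is correct and follows essentially the same route as the paper: extract an $o^\ast$ whose $\lambda_{o^\ast}$ is certified non-Lipschitz by the deterministic soundness clause of Definition~\ref{def:lip-tester}, pass to a neighboring pair $(\D,\D')$ violating the $1$-Lipschitz bound, and exponentiate to contradict $\alpha$-differential privacy on the singleton set $\{o^\ast\}$. Your explicit triangle-inequality argument for reducing an arbitrary violating pair to a Hamming-neighboring one fills in a step the paper's proof leaves implicit (it jumps directly to a neighboring pair with $|\lambda_o(\D)-\lambda_o(\D')|>1$), so your write-up is, if anything, a slightly more careful version of the same argument.
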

\begin{proof}
If Algorithm $\atest$ outputs a $\bm{NO}$, then there exists an $o\in\Gamma$ such that $\liptest$ outputs NO on $\lambda_o$. By defintion of $\liptest$ (see Definition~\ref{def:lip-tester}), we get that $\lambda_o$ is not Lipschitz. In other words, we have, $|\lambda_o(\D)-\lambda_o(\D')|= |\frac{1}{\alpha}\log\mu(\A(\D)=o)-\frac{1}{\alpha}\log\mu(\A(\D')=o)| > 1$. Therefore, either $\mu(\A(\D)=o)>e^{\alpha}\mu(\A(\D')=o)$ or $\mu(\A(\D)=o)<e^{-\alpha}\mu(\A(\D')=o)$, as required.
\end{proof}

\begin{claim}[Completeness guarantee]\label{claim:completeness}
If Algorithm $\atest$ (Algorithm \ref{alg:certify}) outputs $\bm{YES}$, then with probability at least $1-\gamma$ (over the randomness of $\liptest$), the candidate algorithm $\A$ is $(\alpha \theta,0, \beta)$-generalized differentially private.
\label{cl14m}
\end{claim}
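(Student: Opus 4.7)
The plan is to unwind the definition of $\atest$, apply the completeness guarantee of $\liptest$ to each of the $|\Gamma|$ calls, and then glue the resulting ``bad'' sets together via two union bounds: one over the internal randomness of $\liptest$, the other over the domain probability. The main work is bookkeeping: since $\atest$ invoked $\liptest$ with proximity parameter $\beta/|\Gamma|$ and failure probability $\gamma/|\Gamma|$ (see line~\ref{line:13} of Algorithm~\ref{alg:certify}), these two factors of $|\Gamma|$ should cancel exactly when we collect the guarantees across all outputs.

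First, I would observe that $\atest$ outputs $\bm{YES}$ only if $\liptest$ outputs $\bm{YES}$ on $\lambda_o$ for every $o \in \Gamma$. By Definition~\ref{def:lip-tester}, each such invocation guarantees that, except with probability $\gamma/|\Gamma|$ over its internal randomness, there exists a set $W_o \subseteq \T$ satisfying (i) $\lambda_o$ is $\theta$-Lipschitz on $\T \setminus W_o$ and (ii) $\Pr_{\D \sim \Dist}[\D \in W_o] \leq \beta/|\Gamma|$. A union bound over the $|\Gamma|$ runs then yields that with probability at least $1 - \gamma$ all of these conditions hold simultaneously.

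Next, define $W = \bigcup_{o \in \Gamma} W_o$. A second union bound (over $\Dist$) gives $\Pr_{\D \sim \Dist}[\D \in W] \leq \beta$, so $W$ is the candidate ``negligible'' set required by Definition~\ref{def:gdp}. I would then fix any pair of neighbors $\D, \D' \in \T \setminus W$ (so $d_H(\D,\D') = 1$) and any output $o \in \Gamma$. Since $\T \setminus W \subseteq \T \setminus W_o$ and $\lambda_o$ is $\theta$-Lipschitz on $\T \setminus W_o$, we get
\[
\bigl|\lambda_o(\D) - \lambda_o(\D')\bigr| \;\leq\; \theta \cdot d_H(\D,\D') \;=\; \theta.
\]
Unfolding $\lambda_o(\D) = \tfrac{1}{\alpha} \log \mu(\A(\D) = o)$ and exponentiating, this yields $\mu(\A(\D)=o) \leq e^{\alpha \theta}\, \mu(\A(\D') = o)$ for every $o$, which by the same summation argument used to pass from \eqref{eq:1} to \eqref{eq:2} extends to arbitrary measurable $\mathcal{O} \subseteq \Gamma$ (this is where finiteness of $\Gamma$ is helpful). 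This is exactly the $(\alpha\theta, 0, \beta)$-GDP condition.

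The only real subtlety, and what I expect to be the thing to get right, is lining up the quantifiers: the testing guarantee promises one bad set $W_o$ \emph{per output}, but GDP requires a \emph{single} bad set $W$ that works for all outputs simultaneously. Taking the union $W = \bigcup_o W_o$ resolves this, but forces the per-output proximity parameter to shrink by a factor of $|\Gamma|$; the matching shrinking of the per-output failure parameter is what keeps the overall failure probability at $\gamma$, and both choices are built into the parameters chosen in line~\ref{line:13} of $\atest$.
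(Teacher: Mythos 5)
Your proposal is correct and follows essentially the same argument as the paper: one union bound over the $|\Gamma|$ calls to $\liptest$ to get overall success probability $1-\gamma$, take $W = \bigcup_{o\in\Gamma} W_o$ as the exceptional set, a second union bound to get $\Pr_{\D\sim\Dist}[\D \in W] \leq \beta$, and then unwind the $\theta$-Lipschitz condition on $\lambda_o$ over $\T\setminus W$ into the multiplicative GDP bound. Your extra remark that the pointwise bound extends to arbitrary measurable $\mathcal{O}\subseteq\Gamma$ by summation is a detail the paper leaves implicit but is a welcome clarification.
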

\begin{proof}
If Algorithm $\A$ outputs $\bm{YES}$, then by the union bound it follows that with probability at least  $1-{\gamma}$, the following condition holds for every $o \in \Gamma$: There exists a set $W_o \subseteq \T$ such that (i) $\lambda_o$ satisfies  $\theta$-Lipschitz condition for every $\D, \D^{\prime} \in \T \setminus W_o$ and (ii)  $\Pr\limits_{x \sim \Dist}[x \in W_o] < \frac{\beta}{|\Gamma|}$.

Let $W = \displaystyle\bigcup_{o \in \Gamma} W_o$. We show that with probability at least $1-\gamma$ (over the randomness of $\liptest$), the following holds: algorithm $\A$ satisfies $\alpha \theta$-differential privacy condition on the set $\T \setminus W$ and $\Pr\limits_{\D \sim \Dist}[\D \in W] \leq \beta$.

Condition (i) above implies that for every $o \in \Gamma$, $\lambda_o$ is $\theta$-Lipschitz on $\T \setminus W$. Therefore, we get the following for every neighboring pairs of data sets $\D,\D'\in\T\setminus W$.

\begin{align*}
|\lambda_o(\D) - \lambda_o(\D^{\prime})| \leq \theta\\
\Rightarrow | \frac{1}{\alpha}\log\mu(\A(\D)=o) -  \frac{1}{\alpha}\log\mu(\A(\D^{\prime})=o)| \leq \theta \\
\Rightarrow e^{-\alpha \theta}\leq\frac{\mu(\A(\D) = o)}{\mu(\A(\D') = o)}\leq e^{\alpha \theta}
\end{align*}

Also, using  Condition (ii) and the union bound over all $o \in \Gamma$, we get the following.
$$\Pr\limits_{\D \sim \Dist}[\D \in W] \leq \sum_{o \in \Gamma} \Pr\limits_{\D \sim \Dist}[\D \in W_o] \leq \beta.$$

Since Conditions (i) and (ii) both hold with probability at least $1-\gamma$ (over the randomness of $\liptest$), we get the desired claim.

\end{proof}

\subsection{Application of GDP tester to ensure privacy for the output of a given candidate algorithm}
\label{sec:cand}

In this section we will demonstrate how one can use Algorithm $\atest$ (Algorithm \ref{alg:certify}) designed in the previous section to guarantee $(\alpha,\beta,\gamma)$-generalized differential privacy to the output produced by a candidate Algorithm $\A$. The details are given in Algorithm \ref{alg:privGen}.
The theoretical guarantees for Algorithm \ref{alg:privGen} are given below.

\begin{theorem}[$(\theta,\alpha,\gamma,\beta)$-generalized differentially private mechanism]
Let $\liptest$ be a $\theta$-approximate Lipschitz tester (see Definition \ref{def:lip-tester}) used in the testing algorithm $\atest$ (Algorithm \ref{alg:certify}). Under the same assumptions of Theorem \ref{thm:utilpriv}, following are true for Algorithm $\A_{\bm{privGen}}$ (Algorithm \ref{alg:privGen}).
\begin{itemize}
\item \textbf{(privacy)} Algorithm $\A_{\bm{privGen}}$ (Algorithm \ref{alg:privGen}) is $(\alpha\theta,\beta,\gamma)$-generalized differentially private (GDP).
\item \textbf{(utility)} If the candidate Algorithm $\A$ is $\alpha$-differentially private, then Algorithm $\A_{\bm{privGen}}$ (Algorithm \ref{alg:privGen}) always produces the output $\A(\D)$.
\end{itemize}
\label{thm:priv1}
\end{theorem}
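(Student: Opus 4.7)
My first step is to commit to the natural form of Algorithm~\ref{alg:privGen}: on input $\D$, invoke $\atest$ with parameters $(\alpha,\beta,\gamma)$ and the oracles ${\cal O}_{\A}$ and $\Dist$, release $\A(\D)$ if the verdict is $\bm{YES}$, and otherwise release a fixed default symbol $\bot$. The single structural fact I will lean on throughout is that $\atest$ never sees the particular $\D$ fed to $\A_{\bm{privGen}}$---it accesses $\A$ only through the global oracle ${\cal O}_{\A}$ and samples from $\Dist$---so the tester's coins are independent of $\D$, and its verdict is a Bernoulli event that can be conditioned on cleanly for any pair of neighbors $\D,\D'$.

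The utility claim will follow at once from the contrapositive of Claim~\ref{claim:soundness}, which holds with probability one: if $\atest$ ever outputs $\bm{NO}$, then $\A$ is not $\alpha$-differentially private. Hence, whenever $\A$ is $\alpha$-differentially private the tester is forced to output $\bm{YES}$ on every execution, and $\A_{\bm{privGen}}$ releases $\A(\D)$ on every input.

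For the privacy claim I plan to do a two-case analysis according to whether $\A$ is truly $(\alpha\theta,0,\beta)$-GDP. In the ``good'' case, $\A_{\bm{privGen}}(\D)$ is a convex mixture of $\A(\D)$ and the constant $\bot$ with $\D$-independent mixing weights, and the $(\alpha\theta,0,\beta)$-GDP inequality for $\A$ propagates termwise to $\A_{\bm{privGen}}$ on the same bad set $W$ of $\Dist$-mass at most $\beta$. In the ``bad'' case, Claim~\ref{claim:completeness} forces $\Pr[\atest=\bm{YES}]\le\gamma$, so $\A_{\bm{privGen}}(\D)$ equals the $\D$-independent constant $\bot$ outside a $\gamma$-measure event; a direct expansion gives
\[
\bigl|\Pr[\A_{\bm{privGen}}(\D)\in\mathcal{O}] - \Pr[\A_{\bm{privGen}}(\D')\in\mathcal{O}]\bigr| \;=\; \Pr[\atest=\bm{YES}]\cdot\bigl|\Pr[\A(\D)\in\mathcal{O}]-\Pr[\A(\D')\in\mathcal{O}]\bigr| \;\le\; \gamma,
\]
which already subsumes the GDP inequality. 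Collating the two cases yields the claimed GDP guarantee, with the data-level $\beta$ inherited as the bad-set probability from the completeness of $\atest$ and the tester-failure $\gamma$ inherited as the additive slack. The main care required is bookkeeping the two independent sources of randomness---$\A$'s internal coins (driving the GDP inequality for $\A$) and $\atest$'s coins (deciding whether $\A$ is executed at all)---and observing that $\D$-independence of the tester's verdict lets the $\bm{YES}/\bm{NO}$ probabilities factor out symmetrically for $\D$ and $\D'$, so no cross-dataset coupling is needed.
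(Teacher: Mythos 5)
Your proof is correct, and it takes a genuinely different route from the paper for the privacy half. The paper's proof of privacy introduces an event $Ev$ (existence of small bad sets $W_o$ on which the $\lambda_o$ fail the $\theta$-Lipschitz condition), asserts that $Ev$ ``holds with probability $1-\gamma$'', and then runs a conditioning chain $\Pr[\A_{\bm{privGen}}(\D)=o]\leq\Pr[\cdot\mid Ev]\Pr[Ev]+\Pr[\bar{Ev}]\leq e^{\alpha\theta}\Pr[\A_{\bm{privGen}}(\D')=o]+\gamma$; its case split is on whether $\A_{\bm{privGen}}$ releases $\A(\D)$ or $\bm{FAILURE}$. This is somewhat awkward because $Ev$, as stated, is a deterministic fact about $\A$ and the $\lambda_o$, not a random event, so ``$Ev$ holds with probability $1-\gamma$'' is really a disguised form of the completeness bound. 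Your decomposition sidesteps this: you case on the deterministic fact of whether $\A$ is actually $(\alpha\theta,0,\beta)$-GDP. In the good case you use only that $\Pr[\atest=\bm{YES}]$ is $\D$-independent, so the GDP inequality for $\A$ is preserved termwise under the $\D$-independent mixture with $\bot$ (giving $(\alpha\theta,0,\beta)$-GDP, strictly stronger). In the bad case you correctly read the one-sided completeness guarantee of Definition~\ref{def:lip-tester} as $\Pr[\atest=\bm{YES}]\leq\gamma$, and then $\A_{\bm{privGen}}$ is a $\D$-independent constant up to $\gamma$ total variation, yielding $(\alpha\theta,\gamma,0)$-GDP. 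Taking the union of bad sets across the two cases gives $(\alpha\theta,\gamma,\beta)$-GDP, as claimed. What your version buys is a cleaner accounting of where each parameter comes from ($\beta$ from $\A$'s own GDP bad set in the good case, $\gamma$ from the tester's false-YES mass in the bad case), and it makes explicit the key structural observation---which the paper leaves implicit---that $\atest$ never queries the particular $\D$ handed to $\A_{\bm{privGen}}$, so its verdict factors out symmetrically for $\D$ and $\D'$. The utility half is identical to the paper's: soundness is one-sided, so an $\alpha$-DP candidate can never draw a $\bm{NO}$.
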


\begin{algorithm}[htb]
  \caption{$\A_{\bm{privGen}}$: Generalized differentially private mechanism}
  \begin{algorithmic}[1]
  \REQUIRE Data set $\D$, candidate algorithm $\A$, testing algorithm $\atest$, data generating distribution $\Dist$, data domain $\T$, output set $\Gamma$, privacy parameters $(\alpha,\beta,\gamma)$
  \STATE Run $\atest$ with parameters $\A,\Dist,\T,\Gamma$, privacy parameters $(\alpha,\beta)$, and failure parameter $\gamma$
  \STATE If $\atest$ outputs $\bm{YES}$, then output $\A(\D)$, output $\bm{FAILURE}$ otherwise
  \end{algorithmic}
  \label{alg:privGen}
\end{algorithm}

\subsubsection{Proof of Theorem \ref{thm:priv1}}
The proof of Theorem \ref{thm:priv1} follows from the two claims below.
\begin{claim}[Privacy]
Algorithm $\A_{\bm{privGen}}$ (Algorithm \ref{alg:privGen}) is $(\alpha\theta,\gamma,\beta)$-generalized differentially private (GDP).
\end{claim}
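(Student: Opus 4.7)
The plan is to case-split on whether the candidate algorithm $\A$ itself satisfies $(\alpha\theta,0,\beta)$-GDP, and to invoke Claim~\ref{claim:completeness} in the negative case. A key structural observation is that $\atest$ takes as input $\A$, $\Dist$, $\T$, $\Gamma$ but \emph{not} the dataset $\D$; hence its internal coins are independent of both $\D$ and the internal coins of $\A$ when $\A$ is subsequently run on $\D$ inside Algorithm~\ref{alg:privGen}. Writing $Y$ and $N$ for the events $\{\atest=\bm{YES}\}$ and $\{\atest=\bm{NO}\}$, this independence yields the decomposition $\Pr[\A_{\bm{privGen}}(\D)\in\mathcal{O}] = \Pr[Y]\cdot\Pr[\A(\D)\in\mathcal{O}] + \Pr[N]\cdot\mathbf{1}[\bm{FAILURE}\in\mathcal{O}]$, in which the two prefactors $\Pr[Y]$ and $\Pr[N]$ do not depend on $\D$ at all.

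Case~1: $\A$ is already $(\alpha\theta,0,\beta)$-GDP with associated bad set $W$. Substituting the multiplicative $e^{\alpha\theta}$-bound for $\A$ into the first summand of the decomposition and noting that the second summand is identical for $\D$ and $\D'$, I obtain the pure multiplicative inequality $\Pr[\A_{\bm{privGen}}(\D)\in\mathcal{O}] \le e^{\alpha\theta}\,\Pr[\A_{\bm{privGen}}(\D')\in\mathcal{O}]$ for every neighboring pair $\D,\D'\in\T\setminus W$, establishing $(\alpha\theta,0,\beta)$-GDP, which trivially implies $(\alpha\theta,\gamma,\beta)$-GDP.

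Case~2: $\A$ is \emph{not} $(\alpha\theta,0,\beta)$-GDP. Claim~\ref{claim:completeness} then forces $\Pr[Y]\le\gamma$; otherwise $\atest$ would wrongly certify a non-GDP $\A$ with probability exceeding its failure budget. Taking $W=\emptyset$ and bounding $\Pr[\A(\D)\in\mathcal{O}]\le 1$ in the decomposition, the first summand contributes at most $\gamma$ while the second summand is bounded above by $\Pr[\A_{\bm{privGen}}(\D')\in\mathcal{O}]$, yielding $\Pr[\A_{\bm{privGen}}(\D)\in\mathcal{O}] \le \Pr[\A_{\bm{privGen}}(\D')\in\mathcal{O}] + \gamma$, which is strictly stronger than the required $(\alpha\theta,\gamma,\beta)$-GDP inequality since $e^{\alpha\theta}\ge 1$.

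The main obstacle is to pin down the correct formal reading of Claim~\ref{claim:completeness}: the property ``$\A$ is $(\alpha\theta,0,\beta)$-GDP'' is a deterministic attribute of the algorithm, while the claim's probability is over the coins of $\liptest$. I will interpret the claim as the joint bound $\Pr[\atest=\bm{YES}\wedge \A\text{ is not }(\alpha\theta,0,\beta)\text{-GDP}]\le\gamma$, which is precisely what the union bound over the $|\Gamma|$ invocations of $\liptest$ with per-call failure probability $\gamma/|\Gamma|$ actually delivers in the proof of Claim~\ref{claim:completeness}. Once this reading is adopted, the two cases above combine to give $(\alpha\theta,\gamma,\beta)$-GDP for $\A_{\bm{privGen}}$ unconditionally.
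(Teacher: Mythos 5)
Your proof is correct, and it takes a genuinely different (and in one respect tighter) route than the paper's. The paper's proof conditions on a ``good event'' $Ev$ over the coins of $\liptest$ and runs a single chain of inequalities
$\Pr[\A_{\bm{privGen}}(\D)=o]\le\Pr[\A_{\bm{privGen}}(\D)=o\,|\,Ev]\Pr[Ev]+\Pr[\bar{Ev}]\le e^{\alpha\theta}\Pr[\A_{\bm{privGen}}(\D')=o]+\gamma$,
secretly using that conditioned on $Ev$ the data sets $\D,\D'$ lie outside some bad set $W=\bigcup_o W_o$; but $Ev$ as written is a deterministic property of $\A$ (so its probability is $0$ or $1$), and the witnessing sets $W_o$ depend on the run of $\liptest$, so the status of ``$\D,\D'\notin W$'' inside the conditional probability is left implicit. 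You avoid this entirely by splitting on the deterministic attribute ``$\A$ is $(\alpha\theta,0,\beta)$-GDP,'' and by exploiting the independence of $\atest$'s coins from $\A$'s coins on $\D$ to get the exact decomposition $\Pr[\A_{\bm{privGen}}(\D)\in\mathcal{O}]=\Pr[Y]\Pr[\A(\D)\in\mathcal{O}]+\Pr[N]\mathbf{1}[\bm{FAILURE}\in\mathcal{O}]$. Case~1 then needs only $\A$'s own GDP bound (no tester guarantee at all), and Case~2 needs only the soundness-of-YES guarantee $\Pr[\atest=\bm{YES}\wedge\A\text{ not GDP}]\le\gamma$, which you correctly identify as the operational content of Claim~\ref{claim:completeness} (and which, as you verify, is what the union bound over the $|\Gamma|$ calls to $\liptest$ actually delivers, with room to spare: one gets $\gamma/|\Gamma|$). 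Your reading also makes explicit an ambiguity in how Definition~\ref{def:lip-tester} and Claim~\ref{claim:completeness} phrase a probability over $\liptest$'s coins of an event that is a deterministic property of the input function; the paper glosses over this, while you pin it down. Net effect: same theorem, cleaner bookkeeping, and Case~1 even gives the strictly stronger $(\alpha\theta,0,\beta)$-GDP conclusion when $\A$ happens to be GDP.
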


\begin{proof}
First note that from Claim \ref{claim:completeness}, it follows that if Algorithm $\atest$ (Algorithm \ref{alg:certify}) outputs $\bm{YES}$, then w.p. $\geq 1-\gamma$, the candidate algorithm $\A$ is $(\alpha\theta,0,\beta)$-GDP. Now to complete the proof, we provide the following argument.

\begin{itemize}
	\item \textbf{Case 1 [Algorithm \ref{alg:privGen} outputs $\A(D)$]:}
We define event $Ev$ to be the following:  For every $o\in\Gamma$ there exists a set $W_o 		 \subseteq \T$ such that (i) $\lambda_o$ satisfies  $\theta$-Lipschitz condition for every $\D, \D^{\prime} \in \T \setminus W_o$ and (ii)  $\Pr\limits_{x \sim \Dist}[x \in W_o] < \beta$. As implied by the GDP guarantee, event $Ev$ holds with probability $1-\gamma$. Hence, we have the following for all $o\in\Gamma\cup\{\bm{FAILURE}\}$
\begin{align*}
\Pr[\A_{\bm{privGen}}(\D)=o]&\leq\Pr[\A_{\bm{privGen}}(\D)=o|Ev]\Pr[Ev]+\Pr[\bar{Ev}]\\
&\leq e^{\alpha\theta}\Pr[\A_{\bm{privGen}}(\D')=o|Ev]\Pr[Ev]+\gamma\\
&\leq e^{\alpha\theta}\Pr[\A_{\bm{privGen}}(\D')=o \wedge Ev]+\gamma\\
&\leq e^{\alpha\theta}\Pr[\A_{\bm{privGen}}(\D')=o]+\gamma
\end{align*}
	\item \textbf{Case 2[Algorithm \ref{alg:privGen} outputs $\bm{FAILURE}$]:}
	In this case, the output is trivially $(\alpha,\gamma,\beta)$-generalized differentially private since the output (i.e., $\bm{FAILURE}$) is independent of the data set $\D$.
	\end{itemize}
With this the proof is complete.
\end{proof}

\begin{claim}[Utility]
If the candidate Algorithm $\A$ is $\alpha$-differentially private, then Algorithm $\A_{\bm{privGen}}$ (Algorithm \ref{alg:privGen}) always produces the output $\A(\D)$.
\end{claim}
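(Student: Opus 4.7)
The plan is to show that whenever the candidate algorithm $\A$ is genuinely $\alpha$-differentially private, the sub-routine $\atest$ cannot output $\bm{NO}$, so by inspection of Algorithm~\ref{alg:privGen} the mechanism always falls into the first branch and returns $\A(\D)$. The argument is really the contrapositive of Claim~\ref{claim:soundness}, but via the soundness half of Definition~\ref{def:lip-tester} rather than that of $\atest$.

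First I would unpack what $\alpha$-differential privacy says about the auxiliary functions $\lambda_o$. If $\A$ is $\alpha$-differentially private, then for every $o \in \Gamma$ and every pair of neighboring datasets $\D, \D' \in \T$ (i.e. $|\D \Delta \D'| = 1$), the inequality $e^{-\alpha} \mu(\A(\D')=o) \leq \mu(\A(\D)=o) \leq e^{\alpha} \mu(\A(\D')=o)$ holds. Taking logs and dividing by $\alpha$, this is exactly the statement $|\lambda_o(\D) - \lambda_o(\D')| \leq 1 = d_H(\D, \D')$ for every neighboring pair. Iterating along a Hamming-shortest path between arbitrary $\D$ and $\D'$ extends this to $|\lambda_o(\D) - \lambda_o(\D')| \leq d_H(\D, \D')$ for all $\D, \D' \in \T$, so each $\lambda_o$ is Lipschitz on the entire domain $\T$.

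Next I would apply Definition~\ref{def:lip-tester} in its contrapositive form. The first bullet of that definition says that when $\liptest$ outputs $\bm{NO}$, then with probability $1$ the input function is not Lipschitz. Equivalently, whenever the input function is Lipschitz, $\liptest$ never outputs $\bm{NO}$. Applied to each $\lambda_o$ in the loop of Algorithm~\ref{alg:certify}, every invocation of $\liptest$ must return $\bm{YES}$, regardless of its internal randomness. Consequently the variable $flag$ is never set to $\bm{TRUE}$ during the loop, and so $\atest$ terminates by outputting $\bm{YES}$ with probability $1$.

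Finally, inspecting Algorithm~\ref{alg:privGen}: when $\atest$ outputs $\bm{YES}$, the mechanism outputs $\A(\D)$, which completes the claim. The only subtlety worth double-checking is that the soundness of $\liptest$ is a deterministic ``probability $1$'' statement rather than a high-probability one, so there is no union bound or failure event to worry about; otherwise the argument is a direct chain of implications: $\alpha$-DP of $\A$ implies Lipschitzness of every $\lambda_o$, which implies every $\liptest$ call returns $\bm{YES}$, which implies $\atest$ outputs $\bm{YES}$, which implies $\A_{\bm{privGen}}$ returns $\A(\D)$.
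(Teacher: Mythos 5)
Your argument is correct and takes essentially the same route as the paper: the paper's own proof is the one-line assertion that $\alpha$-differential privacy of $\A$ forces $\atest$ to always output $\bm{YES}$, and you have simply filled in the derivation of that assertion (each $\lambda_o$ is Lipschitz by the neighbor bound plus the shortest-path extension, and then the probability-$1$ soundness direction of Definition~\ref{def:lip-tester} means $\liptest$ can never return $\bm{NO}$ on a Lipschitz input). Your observation that this direction involves no failure probability and hence no union bound is exactly why the word ``always'' in the claim is justified.
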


The proof of the above claim follows from the fact that if the candidate algorithm $\A$ is $\alpha$-differentially private, then $\atest$ will always output $\bm{YES}$.

\section{Lipschitz Property Testing over Hypercube domain}
\label{sec:hypercube}
In this section, we present a $(1+\delta)$-approximate Lipschitz tester (see Definition~\ref{def:lip-tester}) for functions defined on $\T = \set{0,1}^d$ where the notion of distance is with respect to any product distribution. Specifically, the points in the data set are distributed according to the product distribution $\Pi=Ber(p_1)\times Ber(p_2)\times...,\times Ber(p_d)$ where $Ber(p)$ denotes the Bernoulli distribution with probability $p$. For any vertex $x=(x_1,x_2,...,x_d)\in \T$, $x_i=1$ with probability $p_i$ and $0$ with probability $1-p_i$. Each vertex in $x\in \T$ has an associated probability mass $p_x=p_{i_1}\cdot p_{i_2}\cdots p_{i_k}\cdot (1-p_{j_1})\cdot(1-p_{j_2})\cdots(1-p_{j_{d-k}})$ where $k$ is the hamming weight of $x$, also denoted by $H(x)$ and $i_1,i_2,...,i_k$ denote the indices of $x$ with bit-value $1$.

In this section, we prove the following theorem which gives a $1$-approximate Lipschitz tester for $\delta \mathbb{Z}$-valued functions. A function is $\delta\mathbb{Z}$ valued if it produces outputs in integral multiples of $\delta$.
\begin{theorem}
\label{thm1}
Let $\T = \{0,1\}^d$ be the domain from which the data set are drawn according to a product probability distribution $\Pi=Ber(p_1)\times Ber(p_2)\times...,\times Ber(p_d)$. The Lipschitz property of functions$f:\T\rightarrow \delta\mathbb{Z}$ on these data sets can be tested non-adaptively and with one sided error probability $\omega$ in $O(\frac{d\cdot\min\{d, ImD(f)\}}{\delta(\epsilon - d^2 \delta)}\ln(\frac{2}{\omega}))$ time for $\delta\in (0, 1]$. Here $ImD$ is the image diameter defined in Definition \ref{def:imd}.
\end{theorem}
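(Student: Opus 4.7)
The plan is to follow the template of the uniform-distribution Lipschitz tester of \cite{JR11}, which samples random axis-parallel edges of the hypercube and rejects upon finding a pair $(x, x^{\oplus i})$ with $|f(x)-f(x^{\oplus i})|>1$, but to replace uniform edge sampling with a weighted distribution that reflects $\Pi$. Specifically, for each dimension $i\in[d]$ and each edge $e=(x,x^{\oplus i})$ along that dimension, define an edge weight $w(e)$ proportional to $\min\{p_x, p_{x^{\oplus i}}\}$ (or the product-distribution mass of the associated ``matched pair''). The tester will draw $O\!\left(\tfrac{d\,\min\{d,\,\mathit{ImD}(f)\}}{\delta(\epsilon-d^2\delta)}\ln(2/\omega)\right)$ edges from this distribution and reject iff it sees a violated edge; the one-sided error is then immediate from the soundness of the individual edge check.

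The central step is a repair lemma adapted to the product distribution: if $f$ is $\epsilon$-far from Lipschitz under $\Pi$, then the total $w$-weight of violated edges is $\Omega(\epsilon/(d\cdot\min\{d,\mathit{ImD}(f)\}))$ (modulo the $\delta$-granularity term). I would prove this by constructing an explicit Lipschitz repair $g$ that fixes $f$ one dimension at a time: for each coordinate $i$, walk through all $\Pi$-conditional ``lines'' along dimension $i$, and on each such two-point line adjust the endpoint values so that $|g(x)-g(x^{\oplus i})|\le 1$; because the range lies in $\delta\mathbb{Z}$, each single-coordinate pass can be analyzed by a discrete violation-count argument similar to JR11's dimension reduction, where the cost of the pass (measured in $\Pi$-mass of changed points) is bounded by the $\Pi$-weighted count of violated dimension-$i$ edges. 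Summing over the $d$ coordinates, and paying an extra $\min\{d,\mathit{ImD}(f)\}$ factor for the number of times any point can be touched during the sequential repair (each move changes $f$ by $\delta$, and the total change per point is bounded by both $d$ and $\mathit{ImD}(f)$), produces the claimed weighted-violation lower bound. The additive $d^2\delta$ slack absorbs boundary rounding errors introduced by the discretization to $\delta\mathbb{Z}$ at each of the $d$ sequential passes.

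Given the repair lemma, the tester's analysis is a straightforward Chernoff/independent-sampling argument: each weighted-edge sample independently reveals a violation with probability at least $(\epsilon-d^2\delta)/(d\,\min\{d,\mathit{ImD}(f)\})\cdot \delta$ (the extra $\delta$ factor comes from discretizing the weight scale), so after $O\!\left(\tfrac{d\,\min\{d,\mathit{ImD}(f)\}}{\delta(\epsilon-d^2\delta)}\ln(2/\omega)\right)$ samples the failure probability drops below $\omega$. Sampling a weighted edge takes time $O(d)$ since one only needs to sample $x\sim\Pi$ and choose a dimension $i$ with probability proportional to a simple function of $p_i$; the oracle then needs only two queries to $f$.

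The main obstacle is the weighted repair lemma. In the uniform case, symmetry lets one pair up each edge with its ``reflection'' and redistribute violations cleanly, but under product distributions the two endpoints of an edge carry very different $\Pi$-mass, and a naive coordinate-by-coordinate repair can undo fixes made in earlier passes when the coordinate marginals are skewed. I expect the right formulation is to show that the greedy per-line adjustment is non-increasing in a suitable $\Pi$-weighted potential that tracks both pointwise deviation from Lipschitzness and total $\Pi$-mass of modified points; controlling how this potential decreases under a single-coordinate pass, while simultaneously ensuring the $\delta\mathbb{Z}$-range is preserved (so that the argument only loses the additive $d^2\delta$), is the delicate part of the proof.
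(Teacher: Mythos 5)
Your high-level plan matches the paper's: sample axis-parallel edges from a $\Pi$-aware distribution, and prove a dimension-by-dimension repair lemma showing that when $f$ is $\epsilon$-far under $\Pi$, the weighted mass of violated edges must be $\Omega\!\left(\tfrac{\delta(\epsilon-d^2\delta)}{d\cdot\mathit{ImD}(f)}\right)$. You also correctly flag the real difficulty: when the two endpoints of an edge carry very different $\Pi$-mass, the symmetric JR11 per-line correction can ruin earlier passes, and some $\Pi$-weighted potential must be shown to be (approximately) non-increasing. But you do not supply the mechanism that actually resolves this, and you admit as much. The paper's key new device is an \emph{asymmetric} basic operator: on a violated dimension-$i$ edge $\{x,y\}$ with $f(x)<f(y)-1$, the lower-Hamming-weight endpoint moves by $p_i\delta$ and the other by $(1-p_i)\delta$ (and the roles swap when Hamming weights are reversed). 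This asymmetry is exactly what makes the $\Pi$-weighted violation score $\max\bigl(0,(p_x+p_y)(|f(x)-f(y)|-1)\bigr)$ conserved across the other dimension of each square: for a square with $j$-edges $\{x_b,x_t\}$ and $\{y_b,y_t\}$, one has $p_{y_b}+p_{y_t}=\tfrac{p_i}{1-p_i}(p_{x_b}+p_{x_t})$, so the increase $(p_{x_b}+p_{x_t})p_i\delta$ on one $j$-edge is exactly matched by the decrease $(p_{y_b}+p_{y_t})(1-p_i)\delta$ on the other, and the rounding-to-$\delta\mathbb{Z}$ step at the end of each pass loses only an additive $\delta$, accounting for the $d^2\delta$ slack after all $d$ passes. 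Without identifying this calibrated asymmetric move, the ``suitable $\Pi$-weighted potential'' in your sketch is not pinned down and the repair lemma remains unproved; this is the core of the theorem, so I would not count the proposal as a proof.

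Two smaller issues. First, you sample edges with weight proportional to $\min\{p_x,p_y\}$; the paper uses $\tfrac{p_x+p_y}{d}$, which sums to $1$ over all edges and is the form that plugs directly into the violation-score accounting (each changed endpoint is charged $p_x+p_y$). The $\min$ weighting does not interface with the conservation argument. Second, your tester never estimates the image diameter: the paper first samples $O(\tfrac{1}{\epsilon}\log\tfrac{1}{\omega})$ vertices from $\Pi$ to compute a sample diameter $r$, rejects if $r>d$, and only then samples $O(\tfrac{dr}{\delta\epsilon}\log\tfrac{1}{\omega})$ edges; this preliminary step (an adaptation of Claim 3.1 of JR11 to product distributions) is what yields the $\min\{d,\mathit{ImD}(f)\}$ factor in the running time rather than an unknown $\mathit{ImD}(f)$.
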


Following is an easy corollary of the above giving a $(1+\delta)$-approximate Lipschitz tester for $\mathbb{R}$-valued functions.
\begin{corollary}[of Theorem \ref{thm1}]
\label{cor:tester-for-real-range}
Let $\T = \{0,1\}^d$ be the domain from which the data set are drawn according to a product probability distribution $\Pi=Ber(p_1)\times Ber(p_2)\times...,\times Ber(p_d)$. There is an algorithm that on input parameters $\delta \in (0, 1], \epsilon \in (0, 1), d$ and oracle access to a function $f : \{0,1\}^d \rightarrow \mathbb{R}$ has the following behavior: It accepts if $f$ is Lipschitz and rejects with probability at least $1 - \omega$ if $f$ is $\epsilon$-far (with respect to the distribution $\Pi$) from $(1 + \delta)$-Lipschitz and runs in $O(\frac{d\cdot\min\{d, ImD(f)\}}{\delta (\epsilon - d^2 \delta)}\ln(\frac{2}{\omega}))$  time. Here $ImD$ is the image diameter defined in Definition \ref{def:imd}.
\end{corollary}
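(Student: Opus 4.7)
The plan is to reduce Corollary~\ref{cor:tester-for-real-range} to Theorem~\ref{thm1} by a simple discretization. Given oracle access to $f:\{0,1\}^d \to \mathbb{R}$, I define $\tilde f:\T \to \delta\mathbb{Z}$ by $\tilde f(x) := \delta \lfloor f(x)/\delta \rfloor$, and invoke the $\delta\mathbb{Z}$-valued tester of Theorem~\ref{thm1} on $\tilde f$ with proximity parameter $\epsilon$ and error parameter $\omega$. Each query to $\tilde f$ costs one query to $f$, and $ImD(\tilde f) \leq ImD(f)$, so the runtime matches. Both halves of the analysis exploit that the floor residue $r_x := f(x) - \tilde f(x)$ lies in the half-open interval $[0,\delta)$, so $|r_x - r_y| < \delta$ \emph{strictly} for every pair $x,y$.

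For completeness, if $f$ is $1$-Lipschitz then
\begin{equation*}
|\tilde f(x) - \tilde f(y)| \leq |f(x)-f(y)| + |r_x - r_y| < d_H(x,y) + \delta.
\end{equation*}
Since the left side is a non-negative multiple of $\delta$ and $d_H(x,y)$ is also a multiple of $\delta$ (after the substitution $\delta \mapsto 1/\lceil 1/\delta \rceil$, which only tightens the approximation), the strict inequality forces $|\tilde f(x) - \tilde f(y)| \leq d_H(x,y)$. Thus $\tilde f$ is Lipschitz and Theorem~\ref{thm1} accepts with probability $1$.

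For soundness I argue the contrapositive: if the tester accepts with probability greater than $\omega$, then by Theorem~\ref{thm1} there exists a Lipschitz $g:\T\to\delta\mathbb{Z}$ with $\Pr_{x\sim\Pi}[\tilde f(x)\neq g(x)]\leq\epsilon$. I exhibit a $(1+\delta)$-Lipschitz $g':\T\to\mathbb{R}$ that is $\epsilon$-close to $f$ under $\Pi$, namely $g'(x):=g(x)+r_x$. The equality $g'(x)=f(x)$ holds exactly when $g(x)=\tilde f(x)$, so $g'$ agrees with $f$ on mass $\geq 1-\epsilon$. For the Lipschitz estimate,
\begin{equation*}
|g'(x) - g'(y)| \leq |g(x) - g(y)| + |r_x - r_y| < d_H(x,y) + \delta \leq (1+\delta)\, d_H(x,y),
\end{equation*}
where the last step uses $d_H(x,y)\geq 1$ for $x\neq y$.

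The main obstacle is arranging strictness in both estimates simultaneously: floor rounding produces residues in the half-open interval $[0,\delta)$ and hence gives $|r_x - r_y| < \delta$ strictly, which is exactly the slack both displayed inequalities require. Nearest-value rounding would break this, since its residues live in the closed interval $[-\delta/2,\delta/2]$. The remaining lattice mismatch when $\delta \nmid 1$ is a minor annoyance handled by the substitution $\delta \mapsto 1/\lceil 1/\delta \rceil$, which changes the runtime by only a constant factor.
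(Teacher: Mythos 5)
Your proposal is correct and follows the same discretization reduction that the paper points to (it cites the proof of Corollary~1.2 of \cite{JR11} verbatim and omits details): floor-round $f$ to a $\delta\mathbb{Z}$-valued $\tilde f$, run the integer-valued tester, and lift the corrected function back to a $(1+\delta)$-Lipschitz real-valued function by restoring the residues. You also correctly flag the two subtleties that make floor rounding the right choice: residues in $[0,\delta)$ give a \emph{strict} bound $|r_x-r_y|<\delta$, and the tester's analysis needs $1/\delta\in\mathbb{Z}$, which the replacement $\delta\mapsto 1/\lceil 1/\delta\rceil$ handles at the cost of a constant factor. One small slip: $ImD(\tilde f)\le ImD(f)$ is not quite true --- e.g.\ $f\in\{0.9,1.0\}$ with $\delta=1$ gives $ImD(f)=0.1$ but $ImD(\tilde f)=1$. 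What holds is $ImD(\tilde f)\le ImD(f)+\delta$, which, together with the fact that the tester caps the sample diameter at $d$ and $\delta\le 1$, still yields the claimed running time, so this does not affect the argument.
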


The proof of above theorem and corollary appears in Section~\ref{sec:thmcor}. To state the proof we need the following technical result.


We define a distribution on edges of the hypercube where the probability mass of an edge $\set{x,y}$ is given by $\frac{p_x + p_y}{d}$. Note that $\sum_{(x,y)\in E(H_d)}\frac{(p_x+p_y)}{d}=1$. Thus the probability distribution (we call it $D_E$ henceforth) on the edges defined above is consistent. Our tester is based on detecting violated edges (that is, edges which violate Lipschitz property) sampled from distribution $D_E$. Our main technical lemma (Lemma~\ref{lem1}) gives a lower bound on the probability of sampling a violated edge according to distribution $D_E$ for a function that is $\epsilon$-far from Lipschitz. (Recall that $\epsilon$-far is measured with respect to the distribution $\Pi$.)



\begin{lemma}
\label{lem1}
Let function $f:\{0,1\}^d\rightarrow \delta\mathbb{Z}$ be $\epsilon$-far from Lipschitz. Then
\begin{eqnarray*}
\sum_{(x,y)\in V(f)}{\frac{(p_x+p_y)}{d}} &\ge& \frac{\delta(\epsilon - d^2 \delta)}{d\cdot ImD(f)}
\end{eqnarray*}
Here $ImD$ is the image diameter defined in Definition \ref{def:imd}.
\end{lemma}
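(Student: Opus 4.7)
The proof plan is to argue the contrapositive: I would show that if the weighted violation sum
\[
 W \;:=\; \sum_{(x,y) \in V(f)} (p_x + p_y)
\]
is small, then $f$ must be close in $\Pi$-distance to a $1$-Lipschitz function, contradicting $\epsilon$-farness. Concretely, the goal is to exhibit a Lipschitz approximation $g$ of $f$ and control $\Pr_\Pi[f \neq g]$ in terms of $W$; rearranging the resulting inequality yields the lemma after dividing by $d$.

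A natural candidate is the Lipschitz lower closure
\[
 g(x) \;:=\; \min_{y \in \{0,1\}^d}\bigl(f(y) + d_H(x,y)\bigr),
\]
which is automatically $1$-Lipschitz and satisfies $g \leq f$ pointwise; a $\delta$-rounding step forces $g$ into $\delta\mathbb{Z}$, contributing to the $O(d^2\delta)$ correction in the final bound. Write $C := \{x : g(x) \neq f(x)\}$ and $\phi(x) := f(x) - g(x)$, so $\phi(x) \geq \delta$ for every $x \in C$. Since $g$ is $1$-Lipschitz and $f$ is $\epsilon$-far from Lipschitz under $\Pi$, we obtain $\Pr_\Pi[C] \geq \epsilon - O(d^2\delta)$.

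For each $x \in C$, the definition of $g$ provides a witness $y$ and a shortest path $x = z_0, z_1, \ldots, z_k = y$ along which $\sum_i(f(z_i) - f(z_{i+1})) = f(x) - f(y) > k$. Pigeonhole then implies that the path accumulates at least $\phi(x)$ worth of ``excess'' $\sum_i \max(0, f(z_i) - f(z_{i+1}) - 1)$, entirely contributed by violated edges. I would then distribute the charge $p_x\,\phi(x)$ across the violated edges on the path in proportion to their excesses and sum over $x \in C$:
\[
 \delta \cdot \Pr_\Pi[C] \;\leq\; \sum_{x \in C} p_x\, \phi(x) \;\leq\; ImD(f) \cdot W,
\]
where the right-hand inequality uses that, under the product distribution $\Pi$, the measure of points whose canonical shortest path routes through a fixed edge $(u,v)$ factorizes to $p_u + p_v$ times a marginal over the remaining coordinates, and each per-edge excess is at most $ImD(f)$. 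Combining this with the lower bound on $\Pr_\Pi[C]$ and dividing through by $d$ gives the claim.

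The main obstacle is the per-edge charging bound in the last display: many different $x \in C$ will in general route their witness paths through the same violated edge, so a naive count would overcount a single edge. Exploiting the product form of $\Pi$ is essential here, because the $\Pi$-mass of the sub-hypercube of points whose canonical shortest path flips a given coordinate with a specified value factorizes cleanly into $(p_u + p_v)$ times the marginal over the free coordinates -- precisely the weight that appears on the left-hand side of the lemma. A canonical (e.g., lexicographic) choice of shortest path together with an amortized charging scheme is what I expect makes this rigorous; the $d^2 \delta$ slack in the final bound absorbs the discretization cost of forcing $g$ into $\delta\mathbb{Z}$ across the $d$ coordinate directions.
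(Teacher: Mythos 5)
Your proposal is a genuinely different strategy from the paper's. The paper proves Lemma~\ref{lem1} by transforming $f$ into a Lipschitz function dimension by dimension with a repair operator $A_i$ (an asymmetric version of the JR basic operator, Definitions~\ref{def:BAO1}--\ref{def:MAO1}), tracking progress via a weighted ``violation score'' $vs(\{x,y\}) = \max(0,(p_x+p_y)(|f(x)-f(y)|-1))$ and showing (Lemma~\ref{lem2}) that repairing dimension $i$ cannot increase the score in any other dimension by more than $\delta$. Summing the per-dimension distances then yields the bound. The asymmetry of $B_i$ (adjusting endpoints by $(1-p_i)\delta$ vs.\ $p_i\delta$ depending on Hamming weight) is precisely what makes the violation score stable under the product measure; it replaces any per-edge congestion analysis. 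You instead take the Lipschitz lower closure and try a direct charging argument along canonical shortest paths — a local routing argument rather than a global amortization, and not the route the paper takes.

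However, there is a genuine gap in the charging step, which you yourself flag as the main obstacle but do not resolve. You need
\[
\sum_{x \in C} p_x\,\phi(x) \;\leq\; ImD(f)\cdot\!\!\sum_{(u,v)\in V(f)}(p_u+p_v),
\]
which reduces to showing that the total $\Pi$-mass of points $x$ whose canonical shortest path passes through a fixed violated edge $(u,v)$ is at most $p_u+p_v$. The asserted factorization (``$p_u+p_v$ times a marginal over the remaining coordinates'') is not correct. Even in the uniform case, take the lexicographic path: an edge $(u,v)$ along dimension $i$ is traversed by every $x$ that agrees with $u$ on coordinates $i,\ldots,d$, i.e.\ $2^{i-1}$ points of total mass $2^{i-1-d}$, which exceeds $p_u+p_v = 2^{1-d}$ by a factor of $2^{i-2}$. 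This is not merely a potential overcount: for $f(x) = M\cdot H(x)$ with $M$ large, the lower closure is $g(x) = H(x)$, every point in $C$ routes to $0^d$, and all of those $2^{i-1}$ points really do traverse the same edge. A different canonical path or a randomized spreading of paths would be needed to tame the congestion, and it is not clear such a scheme exists that achieves exactly the $(p_u+p_v)$ bound for general product distributions. The paper's repair-operator argument sidesteps the routing question entirely, and this is not incidental — the congestion control is where the difficulty lives, and your sketch leaves it open.
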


We prove the above lemma in section \ref{subsec:proofLem1}.


\subsection{Lipschitz tester}\label{sec:thmcor}
In this section we prove Theorem~\ref{thm1} and Corollary~\ref{cor:tester-for-real-range}. We first present the algorithm stated in Theorem~\ref{thm1}.
\begin{algorithm}[htb]
  \caption{Lipschitz Tester}
  \begin{algorithmic}[1]
  \REQUIRE Data domain $\T=\{0,1\}^d$, product distribution on data set $\Pi=Ber(p_1)\times Ber(p_2)\times...,\times Ber(p_d)$, failure probability parameter $\omega$, $\m{P}$-distance parameter $\epsilon^{\prime}$, discretization parameter $\delta$
  \STATE Set $\epsilon = \epsilon^{\prime} - d^2 \delta$.
  \STATE Sample $\left\lceil \frac{2}{\epsilon}\ln(\frac{2}{\omega})\right\rceil$ vertices $z_1,z_2,...,z_t$ independently from $\T$ according to the distribution $\Pi$
  \STATE Let $r=\max_{i=1}^{t}f(z_i)-\min_{i=1}^{t}f(z_i)$
  \STATE If $r>d$, reject \label{step:rej1}
  \STATE Sample $\left\lceil \frac{dr}{\delta\epsilon}\ln(\frac{2}{\omega})\right\rceil$ edges independently with each edge $(x,y)$ picked with probability $\frac{(p_x+p_y)}{d}$ from the hypercube $\T$
	\STATE If any of the sampled edges are violated, then reject, else accept \label{step:rej2}
  \end{algorithmic}
  \label{alg:tester}
\end{algorithm}

\begin{proofof}{Theorem~\ref{thm1}}
First observe that if input function $f$ is Lipschitz then the Algorithm~\ref{alg:tester} always accepts. This is because a Lipschitz function $f$ has image diameter (see Definition \ref{def:imd}) at most $d$ (and hence cannot be rejected in Step~\ref{step:rej1}. Moreover, it does not have any violated edges (and hence cannot be rejected in Step~\ref{step:rej2}). Next consider the case when $f$ is $\epsilon$-far from Lipschitz. Towards this we first extend Claim 3.1 of \cite{JR11} about sample diameter $r$ to our setting where the distance (in particular, the notion of $\epsilon$-far) is measured with respect to product distribution.
\begin{claim}
\label{clm1}
The steps 1. and 2. of the tester outputs $r\in \delta\mathbb{Z}$ such that $r\le ImD(f)$ and with probability at least $1-\frac{\omega}{2}$ (failure probability at most $\frac{\omega}{2}$), $f$ is $\epsilon$-close to having diameter that is at most $r$.
\end{claim}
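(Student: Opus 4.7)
The plan is to verify the three assertions of the claim in order. The first two---that $r\in\delta\mathbb{Z}$ and $r\le ImD(f)$---are essentially by definition: since $f$ takes values in $\delta\mathbb{Z}$, so do the sample maximum and minimum, and hence their difference $r$; and $r$ is the diameter of $f$ restricted to the sample, which is no larger than the image diameter over all of $\T$. The substantive task is the third assertion: with probability at least $1-\omega/2$ over the draw of $z_1,\ldots,z_t$, we must exhibit a function $g:\T\to\delta\mathbb{Z}$ with $ImD(g)\le r$ such that $\Pr_{x\sim\Pi}[f(x)\neq g(x)]\le\epsilon$.

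The natural candidate for $g$ is the clipping of $f$ to the sample's value range: $g(x)=\max\!\bigl(\min(f(x),\max_i f(z_i)),\ \min_i f(z_i)\bigr)$. By construction $g$ is $\delta\mathbb{Z}$-valued and has $ImD(g)\le r$, so what remains is to bound the $\Pi$-mass of the disagreement set $\{x:f(x)<\min_i f(z_i)\}\cup\{x:f(x)>\max_i f(z_i)\}$. The key idea is to introduce two threshold values that depend only on $f$ and $\Pi$ (not on the samples): let $v_{\mathrm{low}}$ be the smallest value $v\in\delta\mathbb{Z}$ with $\Pr_{x\sim\Pi}[f(x)\le v]>\epsilon/2$, and symmetrically let $v_{\mathrm{high}}$ be the largest value with $\Pr_{x\sim\Pi}[f(x)\ge v]>\epsilon/2$. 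Define $B_{\mathrm{low}}=\{x:f(x)\le v_{\mathrm{low}}\}$ and $B_{\mathrm{high}}=\{x:f(x)\ge v_{\mathrm{high}}\}$; by construction both sets have $\Pi$-mass greater than $\epsilon/2$.

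Since the $z_i$'s are drawn i.i.d.\ from $\Pi$, the probability that none of the $t$ samples hits $B_{\mathrm{low}}$ is at most $(1-\epsilon/2)^t\le e^{-t\epsilon/2}$, which for the chosen $t=\lceil(2/\epsilon)\ln(2/\omega)\rceil$ is $O(\omega)$ and can be driven below $\omega/4$ by absorbing a small constant into the sampling bound; the same estimate applies to $B_{\mathrm{high}}$. A union bound then yields, with probability at least $1-\omega/2$, both $\min_i f(z_i)\le v_{\mathrm{low}}$ and $\max_i f(z_i)\ge v_{\mathrm{high}}$. Conditioned on this good event, the disagreement set lies inside $\{x:f(x)<v_{\mathrm{low}}\}\cup\{x:f(x)>v_{\mathrm{high}}\}$; since $f$ takes values in $\delta\mathbb{Z}$, ``$f(x)<v_{\mathrm{low}}$'' is the same as ``$f(x)\le v_{\mathrm{low}}-\delta$'', so by the minimality of $v_{\mathrm{low}}$ this has $\Pi$-mass at most $\epsilon/2$, and an analogous estimate holds on the high side. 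Adding the two bounds yields the desired $\Pr_{x\sim\Pi}[f(x)\neq g(x)]\le\epsilon$.

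The extension from the uniform-distribution analysis of \cite{JR11} to the product distribution $\Pi$ enters only in the definition of the thresholds and in the concentration step, both of which go through because the samples are drawn from the same distribution $\Pi$ against which distance is measured. The main mild obstacle is handling the boundary behaviour of $v_{\mathrm{low}}$ and $v_{\mathrm{high}}$ when a single atom of $f$ under $\Pi$ straddles the $\epsilon/2$ mark; the use of strict inequalities in the threshold definitions together with the $\delta$-granularity of the range resolves this cleanly without affecting the final bound.
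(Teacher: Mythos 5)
Your proof is correct and follows essentially the same route as the paper: the paper's (terse, JR11-deferred) argument sorts the domain by function value, takes $L$ and $R$ to be the lowest and highest $\epsilon/2$-mass prefix and suffix, shows the sample hits both with high probability, and concludes by clipping $f$ to the sample range. Your threshold-based formulation via $v_{\mathrm{low}}, v_{\mathrm{high}}$ is the same construction phrased slightly more carefully (cleanly handling the boundary atom that the set-based ``first $\ell$ points of mass exactly $\epsilon/2$'' glosses over), and you correctly flag the small constant-factor slack in the sample size $t$ needed to make the union bound land at $\omega/2$ rather than $\omega$ --- a slack the paper inherits silently from \cite{JR11}.
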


\begin{proof}
Sort the points in $\set{0,1}^d$ according the function value in non-decreasing order. Let $L$ be the first $\ell$-points such that their {\em probability mass} sums up to $\frac{\epsilon}{2}$ and $R$ be the set of last $\ell'$ points such that their {\em probability mass} sums up to $\frac{\epsilon}{2}$. The rest of the proof is very similar to the proof of Claim 3.1 in \cite{JR11}, so we omit the details here.
\end{proof}

Having established Claim~\ref{clm1}, rest of the proof is identical to \cite{JR11} and we omit the details.
\end{proofof}

\begin{proofof}{Corollary~\ref{cor:tester-for-real-range}}
It is identical to the proof of Corollary 1.2 in \cite{JR11} and we omit the details.
\end{proofof}

\subsection{Repair Operator and Proof of Lemma~\ref{lem1}}
\label{subsec:MAO}
We show a transformation of an arbitrary function $f:\{0,1\}^d\rightarrow \delta\mathbb{Z}$ into Lipschitz function by changing $f$ on certain points, whose probability mass is related to the probability mass (with respect to $D_E$) of the violated edges of $\T$. This is achieved by repairing one dimension of $\T$ at a time as explained henceforth.
To achieve this, we define an \textbf{asymmetric} version of the basic operator of \cite{JR11}. The operator redefines function values so that it reduces the gap asymmetrically according to the Hamming weights (and probability masses in-turn) of the endpoints of the violated edge. This is the main difference from previous approaches (\cite{JR11}, \cite{AJMR12}) which do not work if applied directly, because of the varying probability masses of the vertices with respect to the Hamming weight. We first define the building block of the repair operator which is called the asymmetric basic operator.

\begin{definition}[Asymmetric basic operator]
\label{def:BAO1}
Given $f:\{0,1\}^d\rightarrow \delta\mathbb{Z}$, for each violated edge $\{x,y\}$ along dimension $i$, where $f(x)<f(y)-1$, define $B_i$ as follows.
\begin{enumerate}
	\item If $H(x)>H(y)$, then $B_i[f](x)=f(x)+(1-p_i)\delta$ and $B_i[f](y)=f(y)-p_i\delta$
	\item If $H(x)<H(y)$, then $B_i[f](x)=f(x)+p_i\delta$ and $B_i[f](y)=f(y)-(1-p_i)\delta$
\end{enumerate}
\end{definition}

Now we define the repair operator.

\begin{definition}[Repair operator]
\label{def:MAO1}
Given $f:\{0,1\}^d\rightarrow \delta\mathbb{Z}$, $A_i[f](x)$ is obtained from $f$ by several applications of the asymmetric basic operator (see Definition~\ref{def:BAO1}) $B_i$ along dimension $i$ followed by a single application of the rounding operator. Specifically, let $f^{\prime}$ be the function obtained from $f$ by applying $B_i$ repeatedly until there are no violated edges along the $i$-th dimension. Then, $A_i[f]$ is defined to be $\mathbf{R}[f^{\prime}]$ where the rounding operator $\mathbf{R}$ rounds the function values to the closest $\delta \mathbb{Z}$-valued function.
\end{definition}

In effect, we have the following picture for the repair operation.
\begin{align*}
f=f_0 \xrightarrow{\mathbf{R} \circ B^{\lambda_1}_1} f_{1} \xrightarrow{\mathbf{R} \circ B^{\lambda_2}_{2}} f_{2} \xrightarrow{} \cdots \xrightarrow{} f_{d-1} \xrightarrow{\mathbf{R} \circ B^{\lambda_d}_d} f_d.
\end{align*}

Now we define a measure called violation score which will be used to show the progress of repair operation. As shown later, the  violation score is approximately preserved along any dimension $j\neq i$ when we apply the repair operator to repair the edges along dimension $i$. Note that the violation score closely resembles the violation score in \cite{JR11} except that it depends on the function value as well as the probability masses of the end-points of the edge.

\begin{definition}
\label{def:VS}
The violation score of an edge with respect to function $f$, denoted by $vs(\{x,y\})$, is $\max(0,(p_x+p_y)(|f(x)-f(y)|-1))$. The violation score along dimension $i$, denoted by $VS^i(f)$, is the sum of violation scores of all edges along dimension $i$
\end{definition}

The violation score of an edge $\{x,y\}$ is positive iff it is violated and violation score of a $\delta\mathbb{Z}$ valued function is contained in the interval $\left[\delta(p_x+p_y), ImD(f)(p_x+p_y)\right]$. Let $V^i(f)$ denote be the set of edges along dimension $i$ violated by $f$. Then
\begin{eqnarray}
\label{eqn1}
\delta\cdot\sum_{\{x,y\}\in V^i(f)}(p_x+p_y) \le VS^i(f) \le \sum_{\{x,y\}\in V^i(f)}(p_x+p_y)\cdot ImD(f)
\end{eqnarray}

Lemma \ref{lem2} shows that $A_i$ does not increase the violation score in dimensions other than $i$ more than the additive value of $\delta$. The lemma makes use of the following claim.

\begin{claim}[Rounding is safe]\label{claim:basic-rounding}
Given $a, b \in \mathbb{R}$ satisfying $|a-b| \leq 1$, let $a'$ (respectively, $b'$) be the value obtained by rounding $a$ (respectively, $b$) to the closest $\delta \mathbb{Z}$ integer. Then $|a' - b'| \leq 1$. 
\end{claim}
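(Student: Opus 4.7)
The plan is to start from a crude bound on $|a'-b'|$ and then sharpen it using the fact that $|a'-b'|$ is itself a nonnegative multiple of $\delta$. Since rounding to the nearest element of $\delta\mathbb{Z}$ moves any input by at most $\delta/2$, the triangle inequality immediately gives
\[
|a'-b'|\le |a'-a|+|a-b|+|b-b'|\le 1+\delta.
\]
This is the easy half; what remains is to rule out the ``bad'' slot $|a'-b'|=1+\delta$.

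Next, I would rescale by $\delta$: setting $u=a/\delta$ and $v=b/\delta$, we have $a'=\delta\cdot\mathrm{round}(u)$ and $b'=\delta\cdot\mathrm{round}(v)$, where $\mathrm{round}:\mathbb{R}\to\mathbb{Z}$ uses any fixed tie-breaking rule (say round-halves-up). The claim then reduces to the elementary estimate
\[
|\mathrm{round}(u)-\mathrm{round}(v)|\le \lceil |u-v|\rceil.
\]
I would verify this by writing $u=m+s$ with $m=\mathrm{round}(u)\in\mathbb{Z}$ and $s\in[-1/2,1/2)$, noting that $\mathrm{round}(v)-m=\lfloor(v-u)+s+1/2\rfloor$, and doing a one-line case analysis on the fractional part of $v-u$ to show that this floor lies between $\lfloor v-u\rfloor$ and $\lceil v-u\rceil$.

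Under the standard discretization assumption that $1/\delta$ is a positive integer, implicit in the $\delta\mathbb{Z}$-valued setting of Section~\ref{sec:hypercube}, we then get $|\mathrm{round}(u)-\mathrm{round}(v)|\le \lceil 1/\delta\rceil=1/\delta$, and hence $|a'-b'|\le 1$. The main obstacle I anticipate is really this integrality point, since without it the statement can fail, e.g.\ $\delta=0.3$, $a=0.14$, $b=1.14$ round to $a'=0$, $b'=1.2$, giving $|a'-b'|=1.2>1$. Once it is verified that the paper's discretization convention enforces $1/\delta\in\mathbb{Z}$, the rounding estimate above is routine integer bookkeeping and the claim follows.
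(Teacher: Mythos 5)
Your proof is correct, and it takes a somewhat different route from the paper's. The paper assumes $a \le b$, observes $a' \in \{\lfloor a\rfloor_\delta,\ \lfloor a\rfloor_\delta + \delta\}$ (where $\lfloor\cdot\rfloor_\delta$ denotes the floor in $\delta\mathbb{Z}$), and does a two-case analysis on which value $a'$ takes, bounding $b'$ against $\lfloor a\rfloor_\delta$ in each case. You instead rescale by $\delta$ to reduce the claim to a clean statement about integer rounding, $|\mathrm{round}(u)-\mathrm{round}(v)|\le\lceil|u-v|\rceil$, and close with $\lceil 1/\delta\rceil=1/\delta$. The two arguments are of comparable length, but yours isolates the standing hypothesis that $1/\delta$ is a positive integer, which the paper uses tacitly here (its step from ``$b'\le\lfloor a\rfloor_\delta+1+\delta$ and $b'\ne\lfloor a\rfloor_\delta+1+\delta$'' to ``$b'\le\lfloor a\rfloor_\delta+1$'' needs $\lfloor a\rfloor_\delta+1\in\delta\mathbb{Z}$) and only states explicitly later in the section. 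Your $\delta=0.3$ counterexample confirms this hypothesis cannot be dropped, which is a genuine value-add. Two minor remarks: (i) the opening triangle-inequality bound $|a'-b'|\le 1+\delta$ is not actually needed once you have the rescaled estimate, since $\lceil 1/\delta\rceil=1/\delta$ finishes the argument directly; it serves only as motivation for the ``rule out the bad slot'' framing. (ii) Both your argument and the paper's implicitly require a \emph{consistent} tie-breaking rule when a point lies exactly halfway between two grid values; you flag this with ``any fixed tie-breaking rule,'' whereas the paper leaves it unstated.
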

\begin{proof}
Assume without loss of generality $ a \leq b$. For $x \in \mathbb{R}$, let $\floor{x}_{\delta}$ be the largest value in $\delta \mathbb{Z}$ not greater than $x$. Observe that $a' \in \set{\floor{a}_{\delta}, \floor{a}_{\delta}+ \delta}$. Using the fact that $\floor{a}_{\delta} \leq b' \leq \floor{a}_{\delta}+1+\delta$, we see that if $a' = \floor{a}_{\delta}+\delta$ then $|b' - a'| \leq 1$ always holds. Therefore, assume $a' = \floor{a}_{\delta}$. This can happen only if $a \leq \floor{a}_{\delta} + \delta/2$. The latter implies $b \leq \floor{a}_{\delta} + 1 + \delta/2$ (using the fact that $b -a \leq 1$). That is $b' \neq \floor{a}_{\delta}+1 + \delta$. In other words, $b' \leq  \floor{a}_{\delta}+1$ again implying $b' - a' \leq 1$, as required.
\end{proof}

\begin{lemma}
\label{lem2}
For all $i,j\in [d]$, where $i\neq j$, and every function $f:\{0,1\}^d\rightarrow \delta\mathbb{Z}$, the following holds. 
\begin{itemize}
\item \textbf{(progress)} Applying the repair operator $A_i$ does not introduce new violated edges in dimension $j$ if the dimension $j$ is violation free, i.e. $VS_j(f) = 0 \Rightarrow VS_j(A^i[f]) = 0$.
\item \textbf{(accounting)} Applying the repair operator $A_i$ does not increase the violation score in dimension $j$ by more than $\delta$, i.e. $VS_j(A^i[f])\le VS_j(f) + \delta$.
\end{itemize}
\end{lemma}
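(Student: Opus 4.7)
The plan is to analyze $A_i$ one 2D subcube at a time and then aggregate. Fix coordinates outside $\set{i, j}$ and consider the resulting four corner points $a, b, c, d$ with $b = a \oplus e_i$, $c = a \oplus e_j$, $d = a \oplus e_i \oplus e_j$, and $a_i = c_i = 0$. Write $A, B, C, D$ for the function values and denote the gaps $g = A - C$, $h = B - D$ on the $j$-edges and $u = B - A$, $v = D - C$ on the $i$-edges; these satisfy the 2D identity $u - v = h - g$.

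The main structural observation is that the asymmetric basic operator $B_i$ is designed to preserve the weighted column sum $(1 - p_i) f(x) + p_i f(y)$ along each $i$-edge $\set{x, y}$ (with $H(x) < H(y)$): adding $p_i \delta$ to $f(x)$ and subtracting $(1 - p_i) \delta$ from $f(y)$ cancels exactly in the weighted sum. Exploiting this conservation, iterating $B_i$ until no $i$-edge is violated yields a pre-rounding function $f^*$ for which, in the sub-case where both $i$-edges are violated in the same direction,
\begin{align*}
A^* - C^* = (1 - p_i)\, g + p_i\, h + p_i\, (r_A - r_C),
\end{align*}
where $r_A, r_C \in [0, \delta)$ are overshoots of the discrete stopping rule $k = \lceil (\text{gap} - 1)/\delta \rceil$. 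A symmetric expression $B^* - D^* = p_i h + (1 - p_i) g - (1 - p_i)(r_A - r_C)$ holds on slice $1$, and analogous convex-combination-plus-residual expressions arise when only one $i$-edge is violated. The sub-case where the two $i$-edges are violated in opposite directions requires $|u - v| > 2$; when $VS_j(f) = 0$ this is ruled out by $|h - g| \leq |g| + |h| \leq 2$, and the sub-case therefore cannot arise in the progress argument.

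For the \emph{progress} claim, combining the expressions above with $|g|, |h| \leq 1$ gives $|A^* - C^*| \leq 1 + p_i \delta$ and $|B^* - D^*| \leq 1 + (1 - p_i)\delta$; in the natural algorithmic setting where $1 \in \delta \mathbb{Z}$, the residuals vanish and the non-rounded $j$-gaps are at most $1$, so Claim~\ref{claim:basic-rounding} certifies that rounding preserves non-violation on every $j$-edge. For the \emph{accounting} claim, I would weight the non-rounded excess over $1$ by $(p_u + p_v)$ and sum across all 2D subcubes using the slice-mass identities $\sum_{\text{slice }0}(p_u + p_v) = 1 - p_i$ and $\sum_{\text{slice }1}(p_u + p_v) = p_i$, giving an aggregate non-rounded excess of at most $2 p_i (1 - p_i) \delta$. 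The rounding contribution is absorbed via Claim~\ref{claim:basic-rounding} edge-by-edge: it is zero on any $j$-edge whose non-rounded gap is already at most $1$, and at most $\delta$ per remaining edge, whose aggregate mass is itself controlled, so the total increase fits within $\delta$.

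The main obstacle is the accounting bound when $VS_j(f) > 0$: the constraint $|g|, |h| \leq 1$ fails and the opposite-direction sub-case becomes active, so Case IV above must be handled. The key tool is the identification of the net column shift $\sigma_x k_x \delta$ as a $1$-Lipschitz soft-thresholding $T(F_x)$ of the $i$-column gap $F_x = f(x, 1) - f(x, 0)$; the $1$-Lipschitz property bounds $|\sigma_x k_x \delta - \sigma_{x'} k_{x'} \delta| \leq |F_x - F_{x'}| + 2\delta$ and hence the change in any $j$-edge gap by the sum of the pre-existing $j$-edge gaps on the two $i$-slices, up to an additive $\delta$. Aggregating, the increase in $VS_j$ from pre-existing violations cancels against the existing potential, and the $\delta$ budget absorbs the discretization plus rounding errors.
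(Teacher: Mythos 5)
Your global closed-form approach (compute the end state of the $B_i$ iteration once, via the conserved weighted column sum $(1-p_i)f(x)+p_i f(y)$, and observe that the $i$-gap is soft-thresholded to $[-1,1]$) is a genuinely different and in principle cleaner route than the paper's, which tracks the $j$-dimension violation score one $B_i$-application at a time with a case analysis on whether the companion $i$-edge is violated and in which direction. The observations you make are correct: with $1 \in \delta\mathbb{Z}$ the iteration replaces $u \mapsto u^* = T(u)$ where $T$ clips to $[-1,1]$, the endpoints move by $A^* = A - p_i(u^*-u)$, $B^* = B + (1-p_i)(u^*-u)$, and consequently $g' = g + p_i(\psi(u)-\psi(v))$ and $h' = h - (1-p_i)(\psi(u)-\psi(v))$ with $\psi(t) = t - T(t)$, so that the weighted $j$-gap $(1-p_i)g + p_i h$ is also conserved. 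This neatly handles the progress part and gives you, for free, $g' - h' = T(v) - T(u)$, which is what rules out any crossing-over.

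However, the accounting part for $VS_j(f) > 0$ has a real gap. After correctly identifying the $1$-Lipschitz soft-thresholding structure, the sentence ``the increase in $VS_j$ from pre-existing violations cancels against the existing potential'' is an assertion, not an argument; nothing in what precedes it establishes $VS_j(f') \leq VS_j(f)$, which is the stronger per-square claim the paper proves and which your route also needs. The missing step is short but essential: since $T$ is monotone and $1$-Lipschitz, $g' - h' = T(v) - T(u)$ has the same sign as $g - h = v - u$ and smaller absolute value, so the repaired pair $(g',h')$ lies between $g$ and $h$ and has the same weighted mean $(1-p_i)g' + p_i h' = (1-p_i)g + p_i h$. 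Then, for the convex hinge $\phi(t)=\max(0,|t|-1)$, a direct convexity computation (write $g'$ and $h'$ as convex combinations of $g,h$ and use the constraint $(1-p_i)t = p_i s$ on the two mixing weights) gives $(1-p_i)\phi(g') + p_i\phi(h') \leq (1-p_i)\phi(g) + p_i\phi(h)$, which is exactly the per-square $VS_j$ monotonicity. Your rounding treatment also doesn't quite land: Claim~\ref{claim:basic-rounding} only says rounding cannot create a violation when the pre-rounded gap is $\leq 1$; for edges whose pre-rounded gap exceeds $1$, you need the paper's additive bound (each value moves by at most $\delta/2$, so each $j$-edge's score changes by at most $\delta(p_u+p_v)$, and these masses sum to $1$ over the $j$-perfect-matching, giving total increase $\leq \delta$), not a ``$\delta$ per edge'' count. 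Finally, your paragraph 3 keeps residual terms $p_i\delta$, $(1-p_i)\delta$ after having just stipulated $1 \in \delta\mathbb{Z}$ where they vanish; either drop them or carry them uniformly.
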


\begin{proof}
Let $f'$ be the function obtained from $f$ by applying $B_i$ repeatedly until there are no
violated edges along the $i$-th dimension. We prove the following stronger claim to prove the lemma.
\begin{claim}\label{claim:vs-is-fully-preserved}
$VS_j(f') \leq VS_j(f).$
\end{claim}
We prove the above claim momentarily but first prove the lemma using the above claim. The function $A_i[f]$ is obtained by rounding the values of $f'$ to the closest $\delta\mathbb{Z}$ values. Since rounding can never create new edge violations by Claim~\ref{claim:basic-rounding}, we immediately get the first part of the lemma. The second part follows from the observation that the rounding step modifies each function value by at most $\delta/2$. Correspondingly, the violation score of an edge along the $j$-th dimension changes by at most $2\cdot (\delta/2) \cdot (p_u + p_v)$ where the factor 2 comes because both endpoints of an edge may be rounded. Summing over all edges in the $j$-th dimension, we get, $\mbox{increase in violation score} \leq \sum_{\{u,v\}}{\delta \cdot (p_u + p_v)} = \delta$ where the last equality holds because edges along the $j$-th dimension form a perfect matching and therefore the probabilities $p_u + p_v$ sum to 1.

\begin{proofof}{Claim~\ref{claim:vs-is-fully-preserved}}
Following the proof outline of a similar proof in \cite{JR11}, we show that application of the asymmetric basic operator in dimension $i$ does not increase the violation score in dimension $j\neq i$. 
Standard arguments \cite{GoldreichGLRS00, DGLRRS99, JR11, AJMR12} show that
it is enough to analyze the effect of applying $B_i$ on one fixed disjoint square formed by adjacent edges that cross dimensions $i$ and $j$. (This is because edges along dimensions $i$ and $j$ form disjoint squares in the hypercube. So having established Claim~\ref{claim:vs-is-fully-preserved} for one fixed square of the hypercube, the full claim follows by summing up the inequalities over all such squares.)
\begin{figure}[htb!]
\includegraphics[height=20mm]{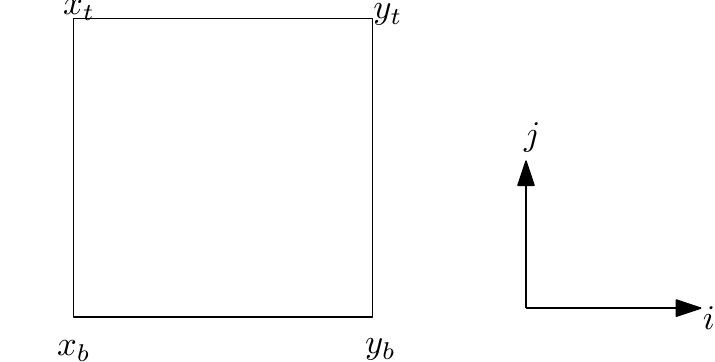}
\label{square2}
\end{figure}
Consider the two dimensional function $f:\{x_b,x_t,y_b,y_t\}\rightarrow \delta\mathbb{Z}$ where $\{x_b,x_t,y_b,y_t\}$ are positioned such that $H(y_t)=H(x_t)+1=H(y_b)+1=H(x_b)+2$ where $H(x_b)$ denotes the hamming weight of $x_b$. Assume that the basic operator is applied along the dimension $i$. We show that the violation score along dimension $j$ does not increase. Assume that the violation score along edge $\{x_b,x_t\}$ increases. First, assume that the $B_i[f](x_t)>B_i[f](x_b)$. (The other case is very similar and we will prove it later.) Then $B_i$ increases $f(x_t)$ and/or decreases $f(x_b)$. Assume that $B_i$ increases $f(x_t)$. (The other case is symmetrical.) This implies that $\{x_t,y_t\}$ is violated and $f(x_t)<f(y_t)$. Let \textbf{$f_k(x)$} (resp. \textbf{$f_k(y)$}) denote the value of $f(x)$ (resp. $f(y)$) after $k$ applications of $B_i$ on an edge $(x,y)$, for an integer $k\ge 0$. If $(x,y)$ is violated after $k-1$ applications of the basic operator, then $f_{k}(x)=f_{k-1}(x)+p_i\delta$ and $f_{k}(y)=f_{k-1}(y)-(1-p_i)\delta$ else $f_{k}(x)=f_{k-1}(x)$ and $f_{k}(y)=f_{k-1}(y)$. We will study the effect of applying $B_i$ on $(x_t, y_t)$ multiple (say $\lambda\ge 1$) times. Recall
that the repair operator is applied only if the edge is violated. This means that\

\begin{eqnarray*}
f_{\lambda-1}(x_t)&<&f_{\lambda-1}(y_t)-1\\
\Rightarrow f(x_t)+(\lambda-1) p_i \delta &<& f(y_t)-(\lambda-1) (1-p_i)\delta -1\\
\Rightarrow f(x_t)+(\lambda - 1) \delta +1 &<& f(y_t)\\
\Rightarrow f(x_t)+\lambda\delta+1 &\le& f(y_t)
\end{eqnarray*}\

The second inequality follows from the observation that since the edge is being corrected in the $\lambda^{th}$ application, it must have been corrected in all previous applications as well.   The last inequality follows from the fact that $f$ is a $\delta\mathbb{Z}$-valued function and
$\frac{1}{\delta}$ is an integer. We subtract $(1-p_i)(\lambda-1)\delta$ from both sides in the above inequality and do some rearrangement to achieve the following.

\begin{eqnarray*}
f(y_t)-(1-p_i)(\lambda-1)\delta &\ge&  f(x_t)+\lambda\delta+1-(1-p_i)(\lambda-1)\delta\\
\Rightarrow f(y_t)-(1-p_i)(\lambda-1)\delta &\ge& f(x_t)+(\lambda -1)p_i\delta +1+\delta\\
\Rightarrow f_{\lambda-1}(y_t) &\ge& f_{\lambda-1}(x_t)+1+\delta
\end{eqnarray*}

The above inequality is crucial for the remaining proof of the lemma \ref{lem1}. Now consider the cases when either the bottom edge is also violated or is not violated.

If the bottom edge is not violated then we have $f_{\lambda-1}(x_b)\ge f_{\lambda-1}(y_b)-1$ and $f_{\lambda-1}(x_b)$ and $f_{\lambda-1}(y_b)$ are not modified by the basic operator. Since $vs(\{x_t,x_b\})$ increases, $f_{\lambda-1}(x_t)>f_{\lambda-1}(x_b)+1-p_i\delta$. Combining the above inequalities, we get $f_{\lambda-1}(y_t)\ge f_{\lambda-1}(x_t)+1+\delta> f_{\lambda-1}(x_b)+2+(1-p_i)\delta\ge f_{\lambda-1}(y_b)+1+(1-p_i)\delta>f_{\lambda-1}(y_b)+1$. Thus the violation score increases along $\{x_t,x_b\}$ by $(p_{x_b}+p_{x_t}) p_i\delta$ and decreases along $\{y_b,y_t\}$ by $(p_{y_b}+p_{y_t})(1-p_i)\delta=(p_{x_b}+p_{x_t})\left(\frac{p_i}{1-p_i}\right)(1-p_i)\delta$ which is same as $(p_{x_b}+p_{x_t})p_i\delta$, keeping the violation score along the dimension $j$ unchanged.

If the bottom edge is violated, then the increase in $vs(\{x_b,x_t\})$ implies that $f_{\lambda-1}(x_b)$ must decrease (after application of $B_i$) by $p_i\delta$ (since $H(x_b)<H(y_b)$) implying $f_{\lambda-1}(y_b)+1<f_{\lambda-1}(x_b))$. Therefore $f_{\lambda-1}(x_t)+p_i\delta> f_{\lambda-1}(x_b)+1-p_i\delta$ or $f_{\lambda-1}(x_t)>f_{\lambda-1}(y_t)+1-2p_i\delta$. Therefore $f_{\lambda-1}(y_t)> f_{\lambda-1}(x_t)+1 > f(x_b)+2-2p_i\delta\ge f(y_b)+3-2p_i\delta+\delta\ge f(y_b)+1+\delta$. The last inequality is true since $\delta\le 1$ and $p_i\le 1$. Thus, $vs(\{x_t,x_b\})$ increases by at most $(p_{x_b}+p_{x_t})2p_i\delta$ while $vs(\{y_t,y_b\})$ decreases by $(p_{y_t}+p_{y_b})2(1-p_i)\delta = (p_{x_b}+p_{x_t})2p_i\delta$, ensuring that violation score along the vertical dimension does not increase.

Now we turn to the case when $B_i[f](x_t)<B_i[f](x_b)$. By the arguments very similar to the first case, it can be proved that $f_{\lambda-1}(x_t)\ge f_{\lambda-1}(y_t)+1+\delta$ and the application of basic operator decreases $f(x_t)$ by $p_i\delta$ and increases $f(y_t)$ by $(1-p_i)\delta$.

If the bottom edge is not violated then $f_{\lambda-1}(y_b)\ge f_{\lambda-1}(x_b)-1$ and $f_{\lambda-1}(x_b)$ and $f_{\lambda-1}(y_b)$ are not modified by the basic operator. Since $vs(\{x_t,x_b\})$ increases, $f_{\lambda-1}(x_b)>f_{\lambda-1}(x_t)+1- p_i\delta$. Combining the above inequalities, we get $f_{\lambda-1}(y_b)\ge f_{\lambda-1}(x_b)-1> f(x_t)-p_i\delta \ge f(y_t)+1+\delta(1-p_i)$. Thus the violation score increases along $\{x_t,x_b\}$ by $(p_{x_b}+p_{x_t}) p_i\delta$ and decreases along $\{y_b,y_t\}$ by $(p_{y_b}+p_{y_t})(1-p_i)\delta=(p_{x_b}+p_{x_t})\left(\frac{p_i}{1-p_i}\right)(1-p_i)\delta$ which is same as $(p_{x_b}+p_{x_t}) p_i\delta$, keeping the violation score along the dimension $j$ unchanged.

If the bottom edge is violated, then the increase in $vs(\{x_b,x_t\})$ implies that $f_{\lambda-1}(x_b)$ must increase implying $f_{\lambda-1}(y_b)>f_{\lambda-1}(x_b)+1$. Therefore, the increase in $vs\{x_b,x_t\}$ implies that $f_{\lambda-1}(x_b)+p_i\delta>f_{\lambda-1}(x_t)-p_i\delta+1$ or $f_{\lambda-1}(x_b)>f_{\lambda-1}(x_t)-2p_i\delta+1$. Combining the above inequalities, we get $f_{\lambda-1}(y_b)>f_{\lambda-1}(x_b)+1>f_{\lambda-1}(x_t)-2p_i\delta+2\ge f_{\lambda-1}(y_t)+3+\delta-2p_i\delta\ge f_{\lambda-1}(y_t)+1+\delta$. The last inequality is true since $\delta\le 1$ and $p_i\le 1$. Thus, $vs(\{x_t,x_b\})$ increases by at most $(p_{x_b}+p_{x_t})2p_i\delta$ while $vs(\{y_t,y_b\})$ decreases by $(p_{y_t}+p_{y_b})2(1-p_i)\delta = (p_{x_b}+p_{x_t})2p_i\delta$, ensuring that violation score along the vertical dimension does not increase.
\end{proofof}
\end{proof}

\subsubsection{Proof of Lemma \ref{lem1}}
\label{subsec:proofLem1}
Using the arguments very similar to \cite{JR11} as given below, we can get the following sequence of inequalities

\begin{eqnarray*}
Dist(f_{i-1},f_i) = Dist(f_{i-1},A_i(f_{i-1}))\le \sum_{(x,y)\in V_i(f_{i-1})}(p_x+p_y)\\
\le \frac{1}{\delta}VS^i(f_{i-1})\le \frac{1}{\delta}VS^i(f) + 2(d-i) \delta \le \frac{1}{\delta}\sum_{(x,y)\in V^i(f)}(p_x+p_y)\cdot ImD(f) + 2(d-i) \delta
\end{eqnarray*}\

Here functions $\{f_i\}_{i=0}^{i=d}$ are defined in the same way as \cite{JR11}. The first inequality holds because $A_i$ modifies $f$ only at the endpoints points $x$ and $y$ of violated edge $(x,y)$ along dimension $i$, thus paying $p_x+p_y$. The second and fourth inequalities follow from Equation (\ref{eqn1}) and the third inequality holds because of Lemma \ref{lem2}.
Therefore, by triangle inequality, we have
\begin{eqnarray*}
\label{eqn2}
Dist(f,f_d)\le \sum_{i\in [d]}Dist(f_{i-1},f_i)\le \sum_{i\in [d]}\left(\sum_{(x,y)\in V^i f(H)}(p_x+p_y)\cdot \frac{ImD(f)}{\delta} \right) + 2(d-i) \delta \\
\leq \left(\sum_{(x,y)\in V(f))}(p_x+p_y)\cdot \frac{ImD(f)}{\delta} \right) + d^2 \delta
\end{eqnarray*}\

For a function which is $\epsilon$-far from Lipschitz, we have $Dist(f,f_d)\ge \epsilon$. Therefore, from the above inequality, we have
\begin{eqnarray*}
\sum_{(x,y)\in V(f)}{\frac{(p_x+p_y)}{d}} &\ge& \frac{\delta(\epsilon - d^2 \delta)}{d\cdot ImD(f)}
\end{eqnarray*}

\section{Instantiation of privacy tester using Lipschitz testers}
\label{sec:hypergrid}
In this section, we instantiate the privacy tester of Section~\ref{sec:privacy} with both known Lipschitz testers as well as the Lipschitz tester developed in this work. This is presented in the table below. The third column gives the ``approximation factor'' as defined in Definition~\ref{def:lip-tester} for the various testers. The final column gives the privacy tester parameters that each of the tester achieves. The last row gives the result of Lipschitz tester (Section~\ref{sec:hypercube}) developed in this work.

\begin{center}
    \begin{tabular}{ | l | l | l |l | l | l|}
    \hline
    Reference 				& Functions &  Approximation factor ($\theta$) & Distribution & Tester running time & Privacy tester \\ \hline
    \cite{JR11}        	&  $\set{0,1}^d \rightarrow \mathbb{R}$      & $1+\delta$     & Uniform         &  $O(\frac{d \cdot ImD(f)}{\epsilon \delta})$                  & $(1+\delta, \alpha, \gamma, \beta)$  \\ \hline
    \cite{AJMR12}       &  $\set{1, \ldots, n}^d \rightarrow \mathbb{R}$      & $1+\delta$     &  Uniform          & $\tilde{O}\left(\frac{d \min\set{ ImD(f), nd}}{\delta\epsilon} \right)$                   &  $(1+\delta, \alpha, \gamma, \beta)$   \\ \hline
	 \cite{CS12}         &  $\set{0,1}^d \rightarrow \mathbb{R}$      & $1$     &  Uniform          &  $O(\frac{d}{\epsilon})$                  &  $(1, \alpha, \gamma, \beta)$   \\ \hline
     \textbf{This work}  &  $\set{0,1}^d \rightarrow \mathbb{R}$      & $1+\delta$    &  \textbf{Product}        &        $O\left(\frac{d \cdot ImD(f)}{(\epsilon-d^2 \delta) \delta}\right)$            &  $(1+\delta, \alpha, \gamma, \beta)$   \\ \hline
    \end{tabular}
\end{center}

%
%

\section{Discussions and Open Problems}
\label{sec:disc}

In this section we discuss about some of the interesting implications of our current work and some of the new avenues it opens up. Also we state some of the open problems that remains unresolved in our work.

\paragraph{Privacy:} In this work, we took the first step towards designing efficient testing algorithm for statistical data privacy. Our work indicates that it is indeed possible to design efficient testing algorithms for some existing notions of statistical data privacy (e.g., generalized differential privacy). It is important that the current paper should be treated as an initial study of the problem and in no way should be interpreted conclusive. It is interesting to explore other rigorous notions of data privacy, their applications and design testers for them.

In this paper, we test for generalized differential privacy, which is a relaxation of differential privacy. It remains an open problem to design a privacy tester for exact differential privacy. The problem seems to be challenging because of the fact that if we want to design an efficient tester, then usually the utility guarantees for the tester allow it to fail with some probability. Now, differential privacy being a worst case notion, it is not clear how to incorporate the failure property of the tester and yet make precise claims about differential privacy.

In the current work, we have designed privacy testers for algorithms where the domain of the data sets are either hypercube or hypergrid. A natural question that arises is that if we can extend the current results to design privacy testers when the data sets are drawn from continuous domain, unlike hypercube or hypergrid.

\paragraph{Lipschitz Testing:} This work presents the first Lipschitz property tester for the setting where the domain points are sampled from a distribution that is not uniform. Because of possible applications to statistical data privacy, this work has motivated the design of such Lipschitz testers for other domains, e.g. hypergrid. Also, this paper mainly shows the tester for the product distribution over the hypercube domain, but it still remains open to design testers for other distributions that may be correlated in some way (e.g., pairwise correlation).

\paragraph{Acknowledgements:} We would like to thank Sofya Raskhodnikova and Adam Smith for various suggestions and comments during the course of this project. 

\small
\bibliographystyle{alpha}
\bibliography{references}
\normalsize

\end{document}